\let\latexaddtocontents\addtocontents
\let\addtocontents\latexaddtocontents
\renewcommand{\H}{\ensuremath{H}}
\newcommand{\RZ}{\ensuremath{R_Z}}
\newcommand{\CU}{\ensuremath{CR_Z}}
\newcommand{\Z}{\ensuremath{Z}}
\newcommand{\CZ}{\ensuremath{CZ}}
\newcommand{\CX}{\ensuremath{CX}}
\newcommand{\dEbit}{|[draw, circle, thin, minimum size=0.5em]| {}}
\newcommand{\dGate}[1]{|[draw, thin, minimum size=1.1em]| {\text{\small $#1$}} \qw}
\newcommand{\dPhase}[1]{|[draw, circle, thin, minimum size=1.1em]| {\text{\small $#1$}}}
\newcommand{\dRz}[1]{\dPhase{#1} \qw}
\newcommand{\dH}{|[draw, thin, minimum size=0.7em]| \qw}
\newcommand{\brac}[1]{\left( #1 \right)}
\newcommand{\bigO}[1]{\mathcal{O}\!\left( #1 \right)}
\newcommand{\ie}{\textit{i.e\@.}}
\newcommand{\eg}{\textit{e.g\@.}}
\newcommand{\repo}{\url{https://github.com/CQCL/pytket-dqc}}
\newcommand{\gateset}{$\left\{ \H, \RZ, \CU \right\}$}
\newcommand{\KaHyPar}{\texttt{KaHyPar}}
\newcommand{\pytketdqc}{\textsf{pytket\_dqc}}
\newcounter{nthm} 
\theoremstyle{definition}
\newtheorem{definition}[nthm]{Definition}
\newtheorem{remark}[nthm]{Remark}
\newtheorem{lemma}[nthm]{Lemma}
\theoremstyle{plain}
\newtheorem{theorem}[nthm]{Theorem}
\newcolumntype{b}{X}
\newcolumntype{m}{>{\hsize=.6\hsize}X}
\newcolumntype{s}{>{\hsize=.3\hsize}X}
\newcommand{\inout}[2]{
	\vspace{7pt}
	\hspace*{\algorithmicindent} \textbf{Input:} #1 \\
	\hspace*{\algorithmicindent} \textbf{Output:} #2

	\hrulefill
	}
\newcommand{\ce}{\texttt{Embed}}
\newcommand{\ces}{\texttt{EmbedSteiner}}
\newcommand{\cesd}{\texttt{EmbedSteinerDetach}}
\newcommand{\hp}{\texttt{Partition}}
\newcommand{\ph}{\texttt{PartitionHetero}}
\newcommand{\pe}{\texttt{PartitionEmbed}}
\newcommand{\phe}{\texttt{PartitionHeteroEmbed}}
\newcommand{\anneal}{\texttt{Annealing}}
\newcommand{\gss}{\texttt{FullG*-Simple}}
\newcommand{\gslp}{\texttt{FullG*-LP}}
\begin{document}

\title{Distributing circuits over heterogeneous, modular quantum computing network architectures}

\author[1,7]{Pablo Andres-Martinez}
\author[2,7]{Tim Forrer}
\author[1,7]{Daniel Mills}
\author[3]{Jun-Yi Wu}
\author[1]{Luciana Henaut}
\author[1]{Kentaro Yamamoto}
\author[2,5]{Mio Murao}
\author[1,4,6]{Ross Duncan}

\affil[1]{Quantinuum, Terrington House, 13-15 Hills Road, Cambridge CB2 1NL, UK}
\affil[2]{Department of Physics, Graduate School of Science, The University of Tokyo, Hongo 7-3-1, Bunkyo-ku, Tokyo 113-0033, Japan}
\affil[3]{Department of Physics, Tamkang University, 151 Yingzhuan Rd}
\affil[4]{Department of Physics and Astronomy, University College London, Gower Street, London, WC1E 6BT, UK}
\affil[5]{Trans-scale Quantum Science Institute, The University of Tokyo, Bunkyo-ku, Tokyo 113-0033, Japan}
\affil[6]{Department of Computer and Information Sciences, University of Strathclyde, 26 Richmond Street, Glasgow G1 1XH, UK}
\affil[7]{These authors contributed equally. Corresponding authors: \{daniel.mills, pablo.andresmartinez\}@quantinuum.com}

\maketitle

\begin{abstract}
    We consider a heterogeneous network of quantum computing modules, sparsely connected via Bell states. Operations across these connections constitute a computational bottleneck and they are likely to add more noise to the computation than operations performed within a module. We introduce several techniques for transforming a given quantum circuit into one implementable on a network of the aforementioned type, minimising the number of Bell states required to do so.
    
    We extend previous works on circuit distribution over fully connected networks to the case of heterogeneous networks. On the one hand, we extend the hypergraph approach of [Andres-Martinez \& Heunen. 2019] to arbitrary network topologies. We additionally make use of Steiner trees to find efficient realisations of the entanglement sharing within the network, reusing already established connections as often as possible. On the other hand, we extend the embedding techniques of [Wu, \textit{et al.} 2022] to networks with more than two modules. Furthermore, we discuss how these two seemingly incompatible approaches can be made to cooperate. Our proposal is implemented and benchmarked; the results confirming that, when orchestrated, the two approaches complement each other's weaknesses.

\end{abstract}

\newpage
\tableofcontents
\newpage

\section{Introduction}

Quantum computing providers are racing to scale up their systems, targeting qubit numbers and gate fidelities that would allow for demonstrations of quantum advantage on practical applications. As architectures scale up, their basic components grow farther apart, increasing the cost of communicating between them. Moreover, operations between distant components require more intermediary elements to be involved, thus making it challenging to maintain high fidelity as errors accumulate.

Distributed quantum computing \cite{caleffi2022distributed} provides an alternative: once a quantum computing module pushing the limits of current classical technology is engineered, it may be more practical to produce copies of it and connect them together than to produce larger singular devices. Indeed, researchers in academia and industry have proposed both short and long-term distributed quantum computing projects.
\begin{description}
    \item[Short-term.] An emerging field of research studies the use of classical postprocessing to `knit together' multiple quantum circuits~\cite{CutQC, Piveteau2022CircuitKnitting}, with the goal of simulating circuits that are too large to be run in current quantum computers. The quantum circuit is `cut' at different points, creating smaller subcircuits that can be run on current quantum computers. The classical postprocessing may be done `offline' --- \ie{} after the quantum computation has finished --- but the overhead scales exponentially with the number of cuts, so the technique is only applicable to circuits that can be split using few cuts. Practical applications can be found in the field of quantum chemistry, where knowledge of the symmetries of the system being modelled can be exploited to generate circuits in which two groups of qubits barely interact with each other~\cite{ChemKnitting}.
    \item[Long-term.] There is a history of academics proposing modular quantum computers~\cite{Rodney_2009, Monroe_2014, Nigmatullin_2016} and related technologies appear in the field of quantum internet~\cite{Wehner2018QuantumInternetReview, Cacciapuoti_2020}. In such modular architectures, it is expected that different modules will interact with each other throughout the computation via entanglement sharing. Currently, the challenge of high-rate generation of entangled states between different modules is too great for the technology to become widely applicable, but we can expect to eventually reach an inflexion point where the communication cost within a large enough module will be comparable to that of entanglement generation between separate modules~\cite{Rodney_2009}. The current road-map of IBM promises the release of the first prototype of a modular quantum computer (Heron) by the end of 2023, and Quantinuum plans to develop a modular quantum computer for its H5 generation.\footnote{The road-map of these companies is publicly available at their respective web-pages at the time of writing.} 
\end{description}

When we reach the inflexion point where modular architectures become advantageous, communication of quantum information between modules will be a significant bottleneck of the computation. It is thus essential to develop circuit optimisation methods that minimise the amount of quantum communication required to distribute a circuit. This is the purpose of the present manuscript. The methods we discuss here are also applicable to the short-term applications of classically simulated circuit knitting~\cite{Piveteau2022CircuitKnitting, ChemKnitting}, since reducing the amount of communication between modules is equivalent to reducing the number of cuts and, hence, the exponential classical overhead.

In this work, we assume that all quantum communication is carried out by the consumption of Bell pairs shared between modules. Previous works on distributed quantum computing focus either on the minimisation of the circuit's depth \cite{Cuomo_2023} or attempt to minimise the number of Bell pairs consumed~\cite{Andr_s_Mart_nez_2019, gsundaram_et_al, Sundaram2022GeneralDistribution, Junyi2022}. This manuscript falls into the second category, since we identify Bell pair generation and sharing as the main bottleneck of the computation. Among the works in this category, \cite{Andr_s_Mart_nez_2019, gsundaram_et_al, Junyi2022} assume a fully connected network of modules. \cite{Sundaram2022GeneralDistribution} studies heterogeneous networks, where not every pair of modules are connected to each other directly, and where each module may have different qubit register capacities.

The task of circuit distribution has some similarities with the qubit routing problem~\cite{cowtan_et_al, childs_et_al}, in that both are concerned with gate scheduling and assignment qubits to hardware registers. The main distinction between them lies in that the goal of routing is to implement a circuit on a \textit{single} module (with limited connectivity), whereas the distribution problem deals with the interaction between multiple modules. Thus, the distribution problem can be studied at a higher level of abstraction, where we may assume operations within a module to be comparatively free. This leads to distribution being naturally related to the mathematical problem of graph partitioning, whereas qubit routing is an instance of token swapping~\cite{childs_et_al}. Moreover, this distinction leads to a desirable separation of concerns: once a circuit is distributed, the next step on a compilation stack is to solve the routing problem for each of its subcircuits, optimising its implementation for the specific hardware constraints of the module it is assigned to.

In \cref{sec:DQC_problem} we give a precise definition of the circuit distribution problem. We review the relevant literature in \cref{sec:background}, focusing on approaches that minimise the number of Bell pairs consumed~\cite{Andr_s_Mart_nez_2019, gsundaram_et_al, Sundaram2022GeneralDistribution, Junyi2022}. The main contribution of our work is the generalisation of the approaches of \cite{Andr_s_Mart_nez_2019, Junyi2022} to target heterogeneous networks. We identify the key optimisation opportunities exploited by these generalisations and describe an approach to combine them in \cref{sec:solutions}. The proposed approach has been implemented as an open source project, \pytketdqc{}, which we benchmark in \cref{sec:benchmarks}.

\section{The DQC problem}
\label{sec:DQC_problem}

In this work we focus on the problem of distributing quantum circuits (DQC) over general networks of quantum computers, minimising the number of entangled resources required to do so. A network is comprised of a collection of quantum computers that we refer to as \emph{modules}. These modules are connected via quantum communication channels, with Local Operations and Classical Communication (LOCC) also available.
A quantum communication channel may be used to generate maximally entangled bipartite states between two modules.
We refer to a such shared state as an \emph{ebit}, and take it to be a Bell state:
\begin{equation}
    \frac{1}{\sqrt{2}} \brac{\ket{00} + \ket{11}}.
\end{equation}

Formally, the network is specified by an undirected graph $G = (V,E)$. Each vertex $\texttt{A} \in V$ corresponds to a module and each edge $(\texttt{A},\texttt{B}) \in E$ indicates that ebits may be prepared and shared between modules $\texttt{A}$ and $\texttt{B}$. Each module $\texttt{A} \in V$ is capable of managing $\omega(\texttt{A})$ qubits dedicated to computation --- its \textit{computation register} --- and $\epsilon(\texttt{A})$ qubits dedicated to communication --- its \textit{link qubit register}. Thus, $\epsilon(\texttt{A})$ determines the maximum number of connections that can be simultaneously maintained by module $\texttt{A}$. These link qubits are disentangled from the rest of the computation at the end of the communication protocol described in \cref{sec:EJPP}. Consequently, we may reuse the space in the link qubit register throughout the computation in order to establish new communications channels at different points in time.\footnote{In this work we abstract away details about inter-module entanglement generation and management. We refer the reader to \cite{Illiano_2022,Kozlowski_2020,Pirker_2019,Van_Meter_2022} for details and reviews of methods of constructing a complete 'quantum internet protocol stack'.}

We assume each of the modules is capable of universal quantum computation and we consider no restrictions on the module's internal qubit connectivity. The particular universal gate set, and the actual internal connectivity of the modules, may be accounted for by a later stage of circuit compilation \cite{Sivarajah_2020} acting individually on the local subcircuit assigned to each module. Our objective is to minimise the total number of ebits consumed, whose preparation and sharing is expected to be the bottleneck of any distributed quantum computation. Throughout the paper we consider that LOCC are comparatively free and assume that circuits are constructed using the gateset \gateset{}, which we depict using the following shorthand:
{
\centering
\begin{tikzpicture}
    \node (H) {
    \begin{tikzpicture}
        \node (gate) {
            \begin{quantikz}[column sep=2mm]
                \qw & \dH{} & \qw
            \end{quantikz}
        };
        \node[right=0mm of gate] (eq) {$=$};
        \node[right=0mm of eq, scale=0.9] {$\frac{1}{\sqrt{2}}\begin{pmatrix} 1&1 \\ 1&-1 \end{pmatrix}$,};
    \end{tikzpicture}
    };
    \node[right=2mm of H] (Rz) {
    \begin{tikzpicture}
        \node (gate) {
            \begin{quantikz}[column sep=2mm]
                \qw & \dRz{\alpha} & \qw
            \end{quantikz}
        };
        \node[right=0mm of gate] (eq) {$=$};
        \node[right=0mm of eq, scale=0.9] {$\begin{pmatrix} 1&0 \\ 0&e^{i \alpha} \end{pmatrix}$,};
    \end{tikzpicture}
    };
    \node[below right=3mm and -20mm of H] (CRz) {
    \begin{tikzpicture}
        \node (gate) {
            \begin{quantikz}[column sep=1mm]
                \qw & \ctrl{1} & \qw \\[-7pt]
                & \dPhase{\alpha} & \\[-7pt]
                \qw & \ctrl{-1} & \qw
            \end{quantikz}
        };
        \node[right=0mm of gate] (eq) {$=$};
        \node[right=0mm of eq, scale=0.9] {$\begin{pmatrix} 1&0&0&0 \\ 0&1&0&0 \\ 0&0&1&0 \\ 0&0&0&e^{i \alpha} \end{pmatrix}$.};
    \end{tikzpicture}
    };
\end{tikzpicture}
}
Evidently, \CZ{} and \Z{} gates are contained within this gateset, since they are particular instances of \CU{} gates and \RZ{} gates whose phase is $\pi$. 

The DQC problem in the case of such a network can be divided into two subproblems:
\begin{itemize}
    \item \textbf{Qubit allocation}. We must determine the allocation of each qubit of the circuit to a module $\texttt{A} \in V$ in the network.  The number of qubits allocated to each module $\texttt{A} \in V$ must not exceed the module's computation register size $\omega(\texttt{A})$. 
\end{itemize}
Once the circuit's qubits are allocated, some two-qubit gates may act on qubits allocated to different modules; we call these \textit{non-local} gates. We are interested in qubit allocations that reduce the number of ebits required to implement these non-local gates.
\begin{itemize}
    \item \textbf{Non-local gate distribution}. Once a qubit allocation is chosen, we must find a way to implement the non-local gates that arise. This may be done by consuming ebits and using LOCC. As in previous works~\cite{Andr_s_Mart_nez_2019,gsundaram_et_al,Sundaram2022GeneralDistribution,Junyi2022}, here we focus on the use of simultaneous gate teleportation --- which we refer to as the EJPP protocol --- as described in \cref{sec:EJPP}. 
\end{itemize}
The maximum number of EJPP protocols sharing data with module $\texttt{A} \in V$ at a given point in time should not exceed the module's link qubit register size of $\epsilon(\texttt{A})$ qubits. Approaches to enforce this capacity constraint have been explored in previous works~\cite{Sundaram2022GeneralDistribution,Junyi2022}. In this work, we present techniques that assume $\epsilon(\texttt{A})$ is not bounded and, hence, may allow an arbitrary number of simultaneous quantum channels. Such an assumption is unreasonable in practice and in \cref{sec:limited links} we discuss a simple algorithm that, given an already distributed circuit, modifies it so that the bound to $\epsilon(\texttt{A})$ for each module $\texttt{A} \in V$ is satisfied.

The solution to a DQC problem can be concisely characterised as follows:
\begin{definition}[Distribution] \label{def:distribution_hand_wavy}
    Let $Q$ be a set of qubits and $V$ a set of modules. A \textit{distribution} of a quantum circuit on $\lvert Q \rvert$ qubits over a network of $\lvert V \rvert$ modules is characterised by:
    \begin{itemize}
        \item a \textit{qubit allocation} map $\phi \colon Q \to V$ such that $\lvert \phi^ {-1}(\texttt{A}) \rvert \leq \omega(\texttt{A})$ for all $\texttt{A} \in V$ and
        \item an equivalent circuit on $|Q| + \sum_{\texttt{A} \in V} \epsilon(\texttt{A})$ qubits that satisfies the qubit allocation map $\phi$ for the qubits in $Q$ and where all multi-qubit gates between modules are realised via the generation and consumption of ebits.
    \end{itemize}
    We are interested in distributions that consume the fewest number of ebits.
\end{definition}

\section{Background}
\label{sec:background}

\subsection{EJPP protocol and distributable packets}
\label{sec:EJPP}

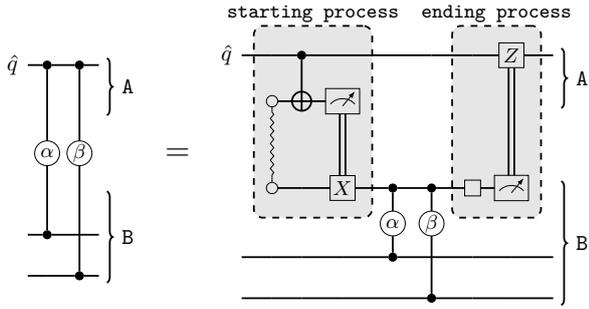
\begin{figure}
    \centering
    \begin{tikzpicture}
  \node[scale=0.85] (eq) {\Large $=$};
  \node[below left=-17mm and 1mm of eq, scale=0.85] (CZs) {
    \begin{quantikz}[column sep=1mm]
        \lstick{$\hat{q}$} & \ctrl{2} & \ctrl{2} & \qw \rstick[2]{\texttt{A}}\\[3pt]
         & & & \\
         & \dPhase{\alpha} & \dPhase{\beta} & \\
         & & & \rstick[3]{\texttt{B}}\\
        \qw & \ctrl{-2} & \qw & \qw \\
        \qw & \qw & \ctrl{-3} & \qw 
    \end{quantikz}
  };
  \node[right=1mm of eq, scale=0.85] (EJPP) {
    \begin{quantikz}[column sep=2mm]
       \lstick{$\hat{q}$} &[5pt] \qw\gategroup[3,steps=3,style={dashed, rounded corners, fill=black!10, inner xsep=2pt}, background]{\small \texttt{starting process}} & \ctrl{1} & \qw &[3pt] \qw & \qw &[3pt] \qw\gategroup[3,steps=2,style={dashed, rounded corners, fill=black!10, inner xsep=2pt}, background]{\small \texttt{ending process}} & \dGate{Z} &[5pt] \qw\rstick[2]{\texttt{A}} \\[-8pt]
       & \dEbit{} \arrow[d, squiggly, no head] & \targ{} & \meter[style={scale=0.75, fill=black!10, thin}]{} \arrow[d, dash, thick, xshift=1.25pt] \arrow[d, dash, thick, xshift=-1.25pt] & & & & & \\[10pt]
       & \dEbit{} & \qw & \dGate{X} & \ctrl{1} & \ctrl{1} & \dH{} & \meter[style={scale=0.75, fill=black!10, thin}]{} \arrow[uu, dash, thick, xshift=1.25pt] \arrow[uu, dash, thick, xshift=-1.25pt] & \rstick[4]{\texttt{B}} \\[-10pt]
       & & & & \dPhase{\alpha} & \dPhase{\beta} & & & \\[-7pt]
      \qw & \qw & \qw & \qw & \ctrl{-1} & \qw & \qw & \qw & \qw \\
      \qw & \qw & \qw & \qw & \qw & \ctrl{-2} & \qw & \qw & \qw
    \end{quantikz}
  };
\end{tikzpicture}
    \caption{\textbf{Distribution of two non-local \CU{} gates via an EJPP protocol.} The gates act on two different modules \texttt{A} and \texttt{B}, the former containing a qubit $\hat{q}$ that both \CU{} gates are applied to. The starting process generates and consumes an ebit, depicted by a wavy line. Other than the ebit, all operations on the distributed circuit are LOCC.}
    \label{fig:ejpp}
\end{figure}

A non-local \CU{} gate can be implemented by consuming a single ebit. The distribution protocol we use originates from~\cite{Eisert_2000} and we refer to it as the EJPP protocol, using the initials of the latter paper's authors. \cref{fig:ejpp} provides an example of such a protocol. During the EJPP protocol, a qubit $\hat{q}$ is shared with a remote module \texttt{B}, entangling it with an ancilla qubit stored in the link qubit register of the module. The starting process of the EJPP protocol --- boxed in grey in~\cref{fig:ejpp} -- generates and consumes an ebit to produce a link qubit that is entangled with $\hat{q}$. The \textit{ending process} only uses LOCC and disentangles the link qubit. Crucially, multiple non-local gates can be implemented using a single EJPP protocol and, hence, consuming a single ebit.

\begin{definition}[Distributable packet] 
    \label{def:packet}
    A \textit{distributable packet} rooted on qubit $\hat{q}$ is a subset of a circuit's non-local \CU{} gates that act on $\hat{q}$ and can all be implemented simultaneously using a single EJPP protocol.\footnote{This definition captures the essence of Definition 16 from~\cite{Junyi2022}. Here, we refer to the elements of a distributable packet $P$ as gates $g \in P$, whereas in~\cite{Junyi2022} the elements of $P$ are pairs $(\hat{q}, t_g)$, where $\hat{q}$ is the qubit that $P$ is rooted on and $t_g$ is the layer in the circuit that gate $g$ appears at. There is an immediate one-to-one correspondence between these two notations; we chose $g \in P$ for the sake of brevity.}
\end{definition}

\begin{lemma} \label{lem:distributable_cond}
Let $P$ be a subset of \CU{} gates in a circuit comprised of gates in \gateset{} for which a qubit allocation map $\phi$ has been provided. If the following three conditions hold, then $P$ is a distributable packet rooted on qubit $\hat{q}$.
    \begin{enumerate}[(a)]
        \item Each gate $g \in P$ acts on $\hat{q}$.
        \item For each $g \in P$ let $q_g$ be the qubit $g$ acts on such that $q_g \not = \hat{q}$; there is a module $\texttt{B} \in V$ such that $\phi(\hat{q}) \not = \texttt{B}$ and $\phi(q_g) = \texttt{B}$ for all $g \in P$.
        \item For every pair of gates $g, g' \in P$, there is no \H{} gate in the circuit acting on $\hat{q}$ between $g$ and $g'$.
    \end{enumerate}
\end{lemma}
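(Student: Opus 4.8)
The plan is to exhibit explicitly the circuit that implements all of $P$ with a single EJPP protocol and to verify that it equals the original circuit. Following \cref{fig:ejpp}, I would open the protocol with a \emph{starting process} placed immediately before the first gate of $P$: consume one ebit shared between $\phi(\hat q)$ and the module $\texttt{B}$ guaranteed by condition (b), and run the cat-entangler (the $\CX$--measure--$\Z$-correction block) to produce a link qubit $e$ inside $\texttt{B}$. Each $g \in P$, originally a non-local $\CU$ between $\hat q$ and its partner $q_g$, would then be replaced by a \emph{local} $\CU$ between $e$ and $q_g$ inside $\texttt{B}$; condition (b) ensures every $q_g$ indeed resides in $\texttt{B}$, so each replacement gate is local. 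Finally an \emph{ending process} (the $\H$--measure--$\Z$-correction cat-disentangler) placed immediately after the last gate of $P$ removes $e$ using only LOCC. It remains to show that this circuit equals the original and is well defined, \ie{} that $e$ faithfully stands in for $\hat q$ throughout the span of $P$.

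The core of the argument is an invariant: after the starting process the link qubit $e$ is perfectly correlated with $\hat q$ in the computational basis, so that the global state is a superposition of terms in each of which $e$ and $\hat q$ carry the same bit value. First I would verify this for the cat-entangler acting on an input $(\alpha\ket0 + \beta\ket1)_{\hat q}$, which yields $\alpha\ket0_{\hat q}\ket0_e + \beta\ket1_{\hat q}\ket1_e$ after the correction, and analogously for a $\hat q$ that is entangled with the rest of the register, since the entangler performs a coherent copy of $\hat q$'s computational-basis value onto $e$. The key observation is then that every gate of \gateset{} acting on $\hat q$ \emph{other than} \H{} is diagonal in $\hat q$'s computational basis: \RZ{} trivially, and \CU{} because it is a controlled-phase depending only on the computational-basis values of both its qubits. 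A diagonal operation on $\hat q$ sends each correlated term $c\,\ket{b}_{\hat q}\ket{b}_e$ to $c'\,\ket b_{\hat q}\ket b_e$, and so preserves the invariant. By condition (c) no \H{} acts on $\hat q$ anywhere between the first and last gate of $P$, hence the invariant persists uninterrupted across the entire window during which the protocol is open.

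Granting the invariant, I would then show each redirected gate is correct. For $g \in P$, condition (a) guarantees $g$ acts on $\hat q$, so the original gate is $\CU(\alpha_g)$ between $\hat q$ and $q_g$, applying the phase $e^{i\alpha_g}$ exactly on the term where $\hat q$ and $q_g$ are both $\ket1$. Because $e$ carries the same bit value as $\hat q$, and \CU{} is symmetric and diagonal, the local $\CU(\alpha_g)$ between $e$ and $q_g$ applies $e^{i\alpha_g}$ on precisely the same terms; the two gates therefore have identical action on any invariant-respecting state. Performing this for every $g \in P$ --- their relative order, and any diagonal gates interleaved on $\hat q$, being immaterial since diagonal operations commute --- reproduces the combined action of all gates of $P$. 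The cat-disentangler then returns $e$ to a product state and restores $\hat q$ exactly, so the replacement is an equality of channels consuming a single ebit, establishing that $P$ is a distributable packet in the sense of \cref{def:packet}.

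I expect the main obstacle to be the interleaving. The gates of $P$ are in general not consecutive, and arbitrary other operations --- further \RZ{} and \CU{} gates, possibly belonging to other packets --- may act on $\hat q$ while the protocol is open. The whole difficulty is concentrated in showing that these intervening operations do not spoil the $\hat q$--$e$ correlation, which is exactly what the diagonality classification together with condition (c) delivers: the single non-diagonal generator \H{} is forbidden on $\hat q$ in that window, and everything else is diagonal and hence harmless. Once the invariant is secured, the remaining checks --- that \CU{} acts identically on $e$ and on $\hat q$, and that the cat-(dis)entangler behaves as claimed --- are routine.
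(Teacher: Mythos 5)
Your proposal is correct, and it constructs exactly the protocol the paper constructs: one starting process sharing $\hat{q}$ with the module $\texttt{B}$ given by condition (b), each gate of $P$ redirected to act on the link qubit and $q_g$ (local in $\texttt{B}$), and a single ending process after the last gate of $P$. Where you genuinely differ is in how correctness is verified. The paper argues syntactically, by circuit rewriting: condition (c) and the gateset imply every intervening gate on $\hat{q}$ is an \RZ{} or \CU{}, these commute with the starting and ending processes, so their presence inside the protocol is harmless; the proof then defers the remaining step to ``the equivalence of \cref{fig:ejpp} generalises to any number of consecutive \CU{} gates, which is straightforward.'' You argue semantically, via an explicit invariant --- the link qubit and $\hat{q}$ stay perfectly correlated in the computational basis --- which is preserved because every non-\H{} gate of \gateset{} acting on $\hat{q}$ is diagonal, and which makes each redirected \CU{} act identically to the original gate. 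These are two phrasings of one underlying fact (diagonality of \RZ{} and \CU{} on $\hat{q}$), but your route is more self-contained: it proves, rather than asserts, both the commutation step and the paper's final ``straightforward'' generalisation, and it handles cleanly the case of intervening gates that touch $q_g$ rather than $\hat{q}$ (which a literal gate-by-gate commutation reading would stumble on, since an \H{} on $q_g$ need not commute past $g$ --- what matters is only that it cannot disturb the $\hat{q}$--link correlation). The price is length; the paper's version is shorter and modular, reducing everything to the pictured equivalence.
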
 \begin{proof}
    Conditions (a) and (b) ensure that sharing the state of qubit $\hat{q}$ with module $\texttt{B}$ is sufficient to implement all of the gates in $P$ locally within $\texttt{B}$. A starting process creates a link qubit in module $\texttt{B}$ that is entangled with $\hat{q}$. Then, each gate $g \in P$ is replaced by the same gate acting on $q_g$ and said link qubit. The ending process is applied after the last gate in $P$, measuring out the link and correcting as necessary to guarantee determinism. Condition (c) along with the circuit's gateset \gateset{} implies that all gates between $g$ and $g'$ commute past them. Closer inspection of the circuit for a starting process and ending process (see \cref{fig:ejpp}) reveals these also commute with \RZ{} and \CU{} gates. Therefore, their presence within the EJPP protocol does not affect its operation and we can apply all gates between $g$ and $g'$ unchanged, on their original qubits. Then, it only remains to check the equivalence of the circuits in \cref{fig:ejpp} generalises to the case of any number of consecutive \CU{} gates, which is straightforward.
\end{proof}

\begin{remark} \label{rmk:distributable_cond}
    While conditions (a) and (b) are necessary for all gates in $P$ to be implementable using a single EJPP protocol, (c) can be replaced with a more general condition using a technique known as \textit{embedding}~\cite{Junyi2022}.
    \begin{enumerate}[(c$^*$)]
        \item For every pair of gates $g, g' \in P$, all gates in the circuit acting on $\hat{q}$ between $g$ and $g'$ are \textit{embeddable}.
    \end{enumerate}
    This leads to larger distributable packets. We will discuss what the term \textit{embeddable} refers to in \cref{sec:embedding}. For now, it suffices to know that condition (c) implies (c$^*$).
\end{remark}

\subsection{DQC via hypergraph partitioning}
\label{sec:hypergraph}

The qubit allocation subproblem introduced in \cref{sec:DQC_problem} is reminiscent of a graph partitioning problem. Indeed, we may define the connectivity graph of a circuit as follows: each qubit in the circuit corresponds to a vertex and each \CU{} gate creates an edge between the vertices of the pair of qubits it acts on. It is straightforward to see that partitioning such a graph into $k$ blocks corresponds to allocating each of the qubits to one of $k$ different modules, and cut edges correspond to non-local gates. Thus, the standard graph partitioning problem --- whose goal is to minimise the number of edges cut --- would produce qubit allocations that minimise the number of non-local gates.

However, such an approach would not consider the fact that a single EJPP protocol is capable of implementing multiple non-local gates consuming a single ebit. If our objective is to minimise the number of ebits consumed, a different partition that creates more non-local gates may be advantageous. An example of such a situation is shown in \cref{fig:justifying_hypergraphs}. Crucially, the optimal qubit allocation of \cref{fig:justifying_hypergraphs}a places qubits $q_0$ and $q_1$ in module \texttt{A} and qubits $q_2$ and $q_3$ in module \texttt{B}, but such an assignment differs from the optimal partition of the circuit's connectivity graph \cref{fig:justifying_hypergraphs}b, which would instead place $q_0$ and $q_2$ in module \texttt{A} and qubits $q_1$ and $q_3$ in module \texttt{B}. The former allocation yields four non-local gates whereas the latter yields only three, however, the former can be distributed using two ebits, while the latter requires three.

\begin{figure*}
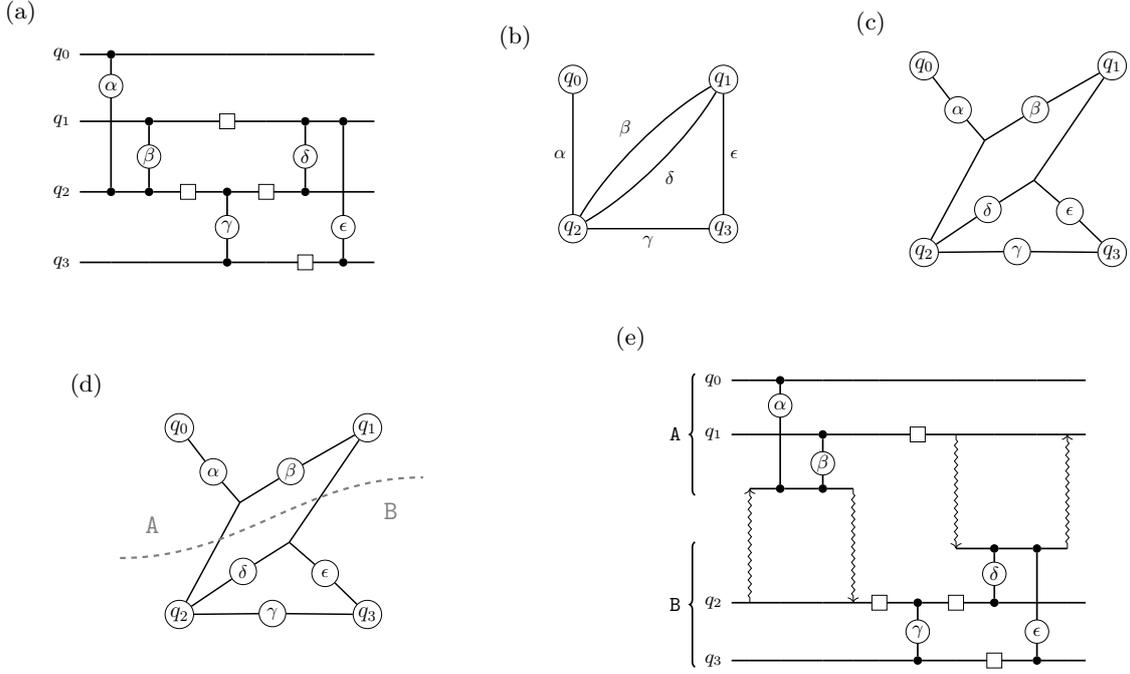

    \centering
    \begin{tikzpicture}
    \node[scale=0.8] (input) {\input{images/justifying_hypergraphs/input_circuit}};
    \node[right=20mm of input, scale=0.7] (graph) {\input{images/justifying_hypergraphs/graph}};
    \node[right=20mm of graph, scale=0.7] (hyp) {\input{images/justifying_hypergraphs/hyp}};
    \node[scale=0.8, below right=10mm and 35mm of input] (dist) {\input{images/justifying_hypergraphs/dist}};
    \node[right=-130mm of dist, scale=0.7] (partition) {\input{images/justifying_hypergraphs/partition}};
    \node[above left=0mm and -1mm of input] {\small (a)};
    \node[above left=0mm of graph] {\small (b)};
    \node[above left=0mm and 1mm of hyp] {\small (c)};
    \node[above left=0mm of partition] {\small (d)};
    \node[above left=0mm of dist] {\small (e)};
\end{tikzpicture}
    \caption{\textbf{Example of correspondence between circuits, graphs and hypergraphs.} (a) An input circuit, (b) its connectivity graph, (c) its hypergraph, as defined in~\cite{Andr_s_Mart_nez_2019}, (d) an optimal partition of the hypergraph, only two hyperedges are cut, (e) the distributed circuit that arises from the hypergraph partition, the number of EJPP protocols matches the connectivity metric of the hypergraph partition: two cuts (Theorem~\ref{thm:reduction}). Starting processes and ending processes are depicted as wavy arrows.}
    \label{fig:justifying_hypergraphs}
\end{figure*}

In~\cite{Andr_s_Mart_nez_2019} it was shown that qubit allocation and non-local gate distribution could both be solved simultaneously via a reduction to hypergraph partitioning. Formally, the only difference between a hypergraph and a graph is that its edges need not be pairs, but subsets of vertices known as `hyperedges'. The intuition behind why hypergraphs are better suited to describe the DQC problem is that, when multiple gates belong to the same distributable packet (Definition~\ref{def:packet}), we may represent them as a single hyperedge. Then, if any number of these gates become non-local due to a qubit allocation, the corresponding effect is that a single hyperedge will be cut by the partition, thus precisely capturing the number of EJPP protocols required to implement the distributable packet. The algorithm that builds such a hypergraph from a given circuit is described in~\cite{Andr_s_Mart_nez_2019} and \cref{fig:justifying_hypergraphs}c shows the outcome of the process on a simple circuit. In~\cite{Andr_s_Mart_nez_2019} the authors proved the following theorem.

\begin{theorem}[\cite{Andr_s_Mart_nez_2019}]
\label{thm:reduction}
    Given a circuit, each of its possible distributed implementations corresponds to a unique partition of its hypergraph. Assuming a fully connected network of modules, the number of ebits required to implement such a distribution coincides with the cost of the partition, calculated using the connectivity metric.\footnote{For a given hypergraph, where $H$ is its set of hyperedges, the connectivity metric~\cite{KaHyPar} of a given partition is calculated as $\sum_{h \in H} \lambda(h)\!-\!1$ where $\lambda(h)$ corresponds to the number of different partition blocks the hyperedge $h$ has vertices on.}
\end{theorem}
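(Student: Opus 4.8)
The plan is to treat the theorem as two linked claims: that distributed implementations and hypergraph partitions are in correspondence, and that under this correspondence the ebit count equals the connectivity metric. I would begin by recalling the hypergraph construction of~\cite{Andr_s_Mart_nez_2019}: its vertices are the circuit's qubits together with its \CU{} gates, and for each qubit $q$ and each maximal segment of $q$'s timeline uninterrupted by an \H{} gate one forms a hyperedge containing the vertex of $q$ and the vertices of all \CU{} gates acting on $q$ within that segment. Consequently every \CU{} gate vertex lies in exactly two hyperedges, one for each of its qubits. A partition assigns every vertex to a block; I would read the restriction of this assignment to qubit vertices as the allocation map $\phi$, and the block of a gate vertex as the module in which that gate is executed.

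For the correspondence I would argue both directions. Given a distributed implementation, $\phi$ is obtained from the qubit allocations and each gate's unique module of execution fixes its gate-vertex block, yielding a well-defined, unique partition. Conversely, given a partition, I would reconstruct a canonical implementation: for each hyperedge $h$ rooted on $\hat{q}$ with $\phi(\hat{q}) = A$, and each remote block $B \neq A$ appearing among the gate vertices of $h$, the gates of $h$ assigned to $B$ act on $\hat{q}$, share the common target module $B$, and lie in a single \H{}-free segment, so they satisfy conditions (a)--(c) of~\cref{lem:distributable_cond} and hence form a single distributable packet implementable by one EJPP protocol that shares $\hat{q}$ with $B$.

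The cost equality then follows by a per-hyperedge count. For a hyperedge $h$ rooted on $\hat{q} \in A$, the canonical implementation uses exactly one EJPP protocol per distinct remote block appearing among $h$'s gate vertices; since the blocks touched by $h$ are precisely $A$ together with those remote blocks, their number is $\lambda(h)$ and the protocol count is $\lambda(h) - 1$. Here the hypothesis of a \emph{fully connected} network is essential: sharing $\hat{q}$ to any module $B$ requires a direct ebit between $\phi(\hat{q})$ and $B$, costing exactly one ebit, so there is no routing overhead — it is precisely this overhead that the Steiner-tree generalisation of the present paper must later account for. Summing over all hyperedges gives $\sum_{h} (\lambda(h) - 1)$ ebits, which is the connectivity metric.

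The main obstacle I anticipate is showing that the count is tight — that the canonical implementation is faithful rather than a mere upper bound — and that the correspondence is well-defined for every partition. Two points need care. First, one must justify that gates not grouped into the same hyperedge genuinely cannot share a single EJPP protocol: gates rooted on distinct qubits share distinct states, and two gates on $\hat{q}$ separated by an \H{} gate cannot be merged without the embedding machinery deliberately excluded here, so the \H{}-induced segmentation of the hypergraph is exactly the obstruction recorded in condition~(c) of~\cref{lem:distributable_cond}. Second, one must fix the meaning of a gate vertex placed in a block containing neither of its qubits; interpreting this as sharing both qubits into that block keeps the per-hyperedge count $\lambda(h)-1$ in force, since such a gate then contributes to both of its hyperedges, so the metric still equals the ebit cost even though such placements are never advantageous. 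Dispatching these two points upgrades the per-hyperedge bound to the claimed equality and makes the correspondence a genuine bijection.
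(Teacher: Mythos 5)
The paper does not prove this statement itself: it is quoted as an imported result from \cite{Andr_s_Mart_nez_2019}, so there is no in-paper proof to compare against. Your reconstruction is correct and follows essentially the argument of that cited source, which this paper's framework presupposes: one EJPP protocol per remote block per hyperedge gives $\lambda(h)-1$ ebits on a fully connected network, and you correctly dispatch the two points the counting hinges on --- that \H{}-separated gates cannot share a protocol once embedding is excluded, and that a gate-vertex placed in a block containing neither of its qubits (a detached gate) pays once in each of its two hyperedges. The only overstatement is the closing claim of a ``genuine bijection'': distinct distributed circuits (e.g.\ differing in where starting and ending processes are inserted) map to the same partition, so the correspondence is injective only up to such equivalence, which is also all the theorem asserts.
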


This implies that we may reduce the problem of distributing a quantum circuit to the problem of hypergraph partitioning as follows.
\begin{enumerate}
    \item Build the hypergraph of the circuit as described in~\cite{Andr_s_Mart_nez_2019}.
    \item Use a state-of-art hypergraph partitioner to obtain an efficient partition.
    \item Translate the partition into a distribution of the circuit.
\end{enumerate}

Notice that in the hypergraph of \cref{fig:justifying_hypergraphs}c there are more vertices than qubits in the circuit. In fact, there is a vertex per qubit and a vertex per \CU{} gate; we call them \textit{qubit-vertices} and \textit{gate-vertices} respectively. When a partition assigns a qubit-vertex to block $\texttt{A}$ it indicates that such a qubit is to be allocated to module $\texttt{A}$; similarly, when a gate-vertex is assigned to block $\texttt{A}$ it indicates that the corresponding \CU{} gate ought to be implemented as a local \CU{} gate within module $\texttt{A}$, with the aid of an EJPP protocol if the hyperedge is cut. \cref{fig:justifying_hypergraphs}d and \cref{fig:justifying_hypergraphs}e exemplify how a partition of the hypergraph gives rise to a distribution of the original circuit.

This approach, as presented in~\cite{Andr_s_Mart_nez_2019} has some shortcomings, identified below.
\begin{itemize}
    \item Hypergraph partitioners often assume that all blocks of the partition should be filled with approximately the same number of vertices each. In our task, however, each module $\texttt{A}$ may have a different capacity of workspace qubits $\omega(\texttt{A})$. To account for this constraint, we may use hypergraph partitioners such as \KaHyPar{}~\cite{KaHyPar} which allow us to indicate the maximum capacity of each block; more details on \cref{sec:hypergraph_heterogeneous}.
    \item In~\cite{Andr_s_Mart_nez_2019}, only fully connected networks were considered (\ie{} complete graphs), whereas in this work we consider heterogeneous networks (\cref{sec:DQC_problem}). Creation and sharing of ebits between adjacent modules is directly supported by the network's hardware. Non-adjacent modules may still share an ebit, but producing it will require some entanglement distribution, consuming multiple hardware-supported ebits in the process. The framework summarised in this section is not capable of making such a distinction. Some techniques used to extend hypergraph partitioning to account for heterogeneous networks are discussed in \cref{sec:hypergraph_heterogeneous}.
    \item In \cref{sec:embedding} we discuss some advanced techniques to further reduce the ebit count of implementing non-local gates by merging multiple distributable packets. These techniques are beyond what can be captured in terms of hypergraph partitioning. Thus, in \cref{sec:solutions} we use the hypergraph partitioning to provide an initial solution to the DQC problem, whose non-local gate distribution is later refined using the techniques from \cref{sec:embedding}, \cref{sec:vertex_cover_embedding} and \cref{sec:refinement}.
\end{itemize}

Approaches that solve the two subproblems of DQC separately --- qubit allocation and non-local gate distribution --- have been proposed in the literature. In~\cite{gsundaram_et_al}, the authors solve the qubit allocation subproblem by partitioning a weighted graph describing the connectivity of the circuit, where the calculation of the weights attempts to take into account cases where multiple non-local gates may be implemented using a single ebit. Such an approach has the same shortcomings listed above, with the additional drawback that the weights only provide an estimate for the ebit cost (rather than the exact value as in the case of hypergraph partitioning) and the advantage that graph partitioners are simpler and, hence, can be expected to perform better than hypergraph partitioners. A follow up paper by the same authors \cite{Sundaram2022GeneralDistribution} solves the qubit allocation subproblem using a Tabu search algorithm. The latter work supports heterogeneous networks, solving one of the three shortcomings discussed above, but fails to take advantage of the optimisation opportunities we discuss in \cref{sec:Steiner}. Both of these works solve non-local gate distribution on a second step, which we review in \cref{sec:vertex_cover}.

\subsection{Embedding}
\label{sec:embedding}

Lemma~\ref{lem:distributable_cond} provides sufficient conditions for a group of non-local \CU{} gates to belong to the same distributable packet. Remark~\ref{rmk:distributable_cond} hinted at a more general condition involving the notion of \textit{embedding} proposed in~\cite{Junyi2022}. \cref{fig:embedding_example} provides a couple of examples where the \CU{} gates of phase $\alpha$ and $\beta$ belong to the same distributable packet even though there are \H{} gates between them, violating condition (c) from Lemma~\ref{lem:distributable_cond}.

\begin{figure}
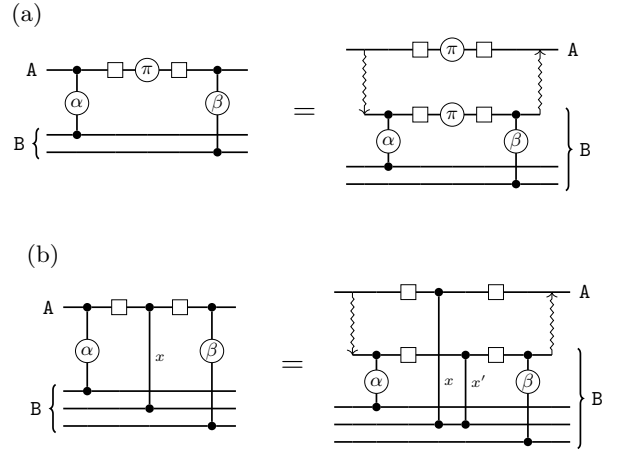

    \centering
    \begin{tikzpicture}
  \node[scale=0.8] (eq1) {\Large $=$};
  \node[below left=30mm and -3.5mm of eq1, scale=0.8] (eq2) {\Large $=$};
  \node[below left=-11mm and 3mm of eq1, scale=0.8] (HZH) {\input{images/embedding_example/HZH}};
  \node[below right=-13mm and 1mm of eq1, scale=0.8] (HZH_emb) {\input{images/embedding_example/HZH_emb}};
  \node[below left=-12.5mm and 3mm of eq2, scale=0.8] (HCZH) {\input{images/embedding_example/HCZH}};
  \node[below right=-14.5mm and 1mm of eq2, scale=0.8] (HCZH_emb) {\input{images/embedding_example/HCZH_emb}};
  \node[above left=1mm and -7mm of HZH] {\small (a)};
  \node[above left=1mm and -7mm of HCZH] {\small (b)};
\end{tikzpicture}
    \caption{\textbf{Examples of embedding.} (a) Embedding $\H \cdot \Z \cdot \H$, (b) embedding $\H \cdot \CZ \cdot \H$, the correction \CZ{} gate $x'$ is local. In both examples, the \CU{} gates with phase $\alpha$ and $\beta$ are implemented by the same EJPP protocol.}
    \label{fig:embedding_example}
\end{figure}

\begin{definition}[Embedding unit] \label{def:embedding_unit}
    Consider an EJPP protocol with starting process $\mathcal{S}_{\hat{q},\texttt{B}}$ sharing qubit $\hat{q}$ with module $\texttt{B}$ and ending process $\mathcal{E}_{\hat{q},\texttt{B}}$ (see \cref{fig:ejpp}). An embedding unit is a subcircuit $C$ satisfying the following identity:
    \begin{equation}
        \mathcal{E}_{\hat{q},\texttt{B}} \, C \, \mathcal{S}_{\hat{q},\texttt{B}} \ = \ \left( \bigotimes_{\texttt{A} \in V} L_\texttt{A} \right) C \left( \bigotimes_{\texttt{A} \in V} K_\texttt{A} \right)
    \end{equation}
    where $V$ is the set of modules in the network and for each module $\texttt{A} \in V$, $L_\texttt{A}$ and $K_\texttt{A}$ are local gates within $\texttt{A}$. We refer to the gates $L_\texttt{A}$ and $K_\texttt{A}$ as the \textit{correction gates} of the embedding.
\end{definition}

In essence, an embedding unit is a subcircuit appearing between gates of a distributable packet $P$ such that, if $P$ is distributed, we only require local correction gates to maintain circuit equivalence. Importantly, notice that we do not require $C$ to be local --- it has not yet been distributed. Indeed, the embedded \CZ{} gate labelled $x'$ in \cref{fig:embedding_example}b is non-local.

It is straightforward from the above definition that any gate that commutes with a starting process $\mathcal{S}_{\hat{q},i}$ forms an embedding unit by itself, which is the reason why condition (c) from Lemma~\ref{lem:distributable_cond} implies (c$^*$) from Remark~\ref{rmk:distributable_cond}. More interesting embedding units containing \H{} gates are captured by the following lemma.

\begin{lemma} \label{lem:embedding_cond}
    Let $C$ be a circuit built from \gateset{} containing a qubit $\hat{q}$, let $\texttt{B}$ be a module and let $\phi$ be a qubit allocation such that $\phi(\hat{q}) \not = \texttt{B}$. If each of the following conditions holds, then $C$ is an embedding unit of an EJPP protocol sharing $\hat{q}$ with module $\texttt{B}$.
    \begin{enumerate}[(a)]
        \item The first gate and last gates in $C$ are \H{} gates acting on $\hat{q}$.
        \item All \CU{} gates within $C$ that act on $\hat{q}$ have their other qubit allocated to module $\texttt{B}$.
        \item All \CU{} gates within $C$ that act on $\hat{q}$ have $\pi$ phase --- \ie{} they are \CZ{} gates. 
        \item All \RZ{} gates within $C$ that act on $\hat{q}$ may be squashed together so that only \RZ{} gates with $\pi$ phase remain --- \ie{} Pauli \Z{} gates.
        \item There are no more than two \H{} gates acting on $\hat{q}$ in $C$.
    \end{enumerate}
\end{lemma}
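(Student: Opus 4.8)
The plan is to verify the defining identity of \cref{def:embedding_unit} directly, by reducing the action of $C$ on the wire $\hat{q}$ to a canonical form and then tracking a single computational-basis amplitude through the starting and ending processes. First I would combine conditions (a) and (e): (a) forces the first and last gates on $\hat{q}$ to be $\H$, while (e) caps the total number of $\H$ gates on $\hat{q}$ at two, so these are the only two and everything strictly between them that touches $\hat{q}$ is a non-$\H$ gate. By (c), (d) and (b), those intermediate gates are exactly one squashed $\Z$ gate on $\hat{q}$ together with $\CZ$ gates coupling $\hat{q}$ to qubits $r$ of module $\texttt{B}$; all of them are diagonal in the computational basis of $\hat{q}$. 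Writing $C|_{\hat{q}} = \H\, D\, \H$ for this diagonal block $D$ (gates of $C$ not acting on $\hat{q}$ commute away and play no role in the identity), the crucial structural fact is that conjugation by $\H$ on the $\hat{q}$ wire is a homomorphism sending the $\Z$-type generators to classical $X$-type ones: $\H\,\Z\,\H = X$ and $\H\,\CZ_{\hat{q},r}\,\H = \CX_{r \to \hat{q}}$. Hence $C|_{\hat{q}}$ is a \emph{classical} reversible gate, i.e.\ a product of bit-flips on $\hat{q}$ controlled by qubits of $\texttt{B}$, mapping computational basis states to computational basis states.

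Next I would recall from the proof of \cref{lem:distributable_cond} and from \cref{fig:ejpp} the logical action of the two processes: $\mathcal{S}_{\hat{q},\texttt{B}}$ is a cat-entangler copying $\hat{q}$ onto a fresh link qubit in $\texttt{B}$ as $|x\rangle_{\hat{q}} \mapsto |x\rangle_{\hat{q}}|x\rangle_{\mathrm{link}}$, while $\mathcal{E}_{\hat{q},\texttt{B}}$ is the matching cat-disentangler, applying $\H$ to the link, measuring it with outcome $w$, and correcting $\hat{q}$ with $\Z^{w}$. With $C|_{\hat{q}}$ classical, I would evaluate $\mathcal{E}_{\hat{q},\texttt{B}}\, C\, \mathcal{S}_{\hat{q},\texttt{B}}$ on a basis state: the entangler produces the mirrored pair, $C$ sends it to $|f(x,\vec{r})\rangle_{\hat{q}}|x\rangle_{\mathrm{link}}$ for the classical function $f$, and the disentangler's $\H$--measure--$\Z^{w}$ sequence returns $\hat{q}$ to $C|_{\hat{q}}$ acting on the original register, leaving only an outcome-dependent phase. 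Collecting that phase shows it is an eigenphase of the $\texttt{B}$-resident control qubits of $f$, hence a local correction within $\texttt{B}$; this is precisely the identity of \cref{def:embedding_unit} with $K_{\texttt{A}} = \mathrm{id}$ and the $L_{\texttt{A}}$ absorbing the Pauli byproducts. The cases $D=\Z$ and $D=\CZ$ are exactly \cref{fig:embedding_example}(a) and (b), which I would present as base computations before noting that a general $D$ is handled identically, since its generators commute and are conjugated independently.

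I expect the main obstacle to be the careful bookkeeping of the measurement-induced Pauli byproducts as they are pushed through $C$, and the verification that each remains \emph{local}. This is where condition (b) is indispensable: because every $\CZ$ partner $r$ lies in $\texttt{B}$, the induced gates $\CX_{r\to\hat{q}}$ have all their controls in $\texttt{B}$, so the byproduct phases attach only to $\hat{q}$ (local to $\phi(\hat{q})$) or to qubits of $\texttt{B}$, never straddling two modules. Condition (e) is what guarantees a \emph{single} entangler/disentangler pair suffices: with only the two boundary $\H$ gates the block $D$ is genuinely diagonal and $C|_{\hat{q}}$ is a single classical gate, whereas an additional interior $\H$ would break diagonality and force the shared state to leave and re-enter the cat subspace, which one EJPP protocol cannot track. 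Once locality of all corrections is confirmed, the identity of \cref{def:embedding_unit} holds and $C$ is an embedding unit.
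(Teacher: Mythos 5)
Your conjugation idea ($\H\,\Z\,\H = X$, $\H\,\CZ_{\hat{q}r}\,\H = \CX_{r\to\hat{q}}$) is the right starting point, and since the paper's own proof is just a one-line citation of Corollary 30 of [Junyi2022], a self-contained argument would be a genuine addition. But there is a concrete gap: your parenthetical claim that ``gates of $C$ not acting on $\hat{q}$ commute away and play no role in the identity'' is false, and the rest of your proof rests on it. Conditions (a)--(e) constrain only the gates that \emph{act on} $\hat{q}$; the circuit $C$ may contain arbitrary gates on other qubits interleaved between the \CZ{}s, and these need not commute with them. Concretely, take $C = \H_{\hat{q}}\cdot \CZ_{\hat{q}r}\cdot \H_{r}\cdot \CZ_{\hat{q}r}\cdot \H_{\hat{q}}$ with $r$ allocated to $\texttt{B}$: all five conditions hold (the interior $\H_r$ acts on $r$, not on $\hat{q}$), yet $C$ is not of the form $\H\, D\, \H$ with $D$ diagonal. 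After conjugation it equals $\CX_{r\to\hat{q}}\cdot \H_r\cdot \CX_{r\to\hat{q}}$, a product of classical gates interleaved with a non-classical one; it maps $|x\rangle_{\hat{q}}|y\rangle_r$ to $\tfrac{1}{\sqrt{2}}\sum_c(-1)^{yc}|x\oplus y\oplus c\rangle_{\hat{q}}|c\rangle_r$, which is not a basis state, so there is no classical function $f(x,\vec{r})$ to track and your single-pass amplitude bookkeeping (and with it your formula for the corrections) collapses. Such interleaving is not exotic --- it is exactly what happens when the partner qubit in $\texttt{B}$ sits inside its own embedding unit, as in the conflict examples of \cref{fig:embedding_conflict}.

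The repair is to argue gate by gate instead of treating $C|_{\hat{q}}$ as one classical gate. After conjugating by the two boundary \H{}s, every gate of $C$ is either (i) $X_{\hat{q}}$ or $\CX_{r\to\hat{q}}$ with $r$ in $\texttt{B}$ (by (b)--(d)), or (ii) a gate not acting on $\hat{q}$ at all. Now prove the invariant: starting from the post-$\mathcal{S}_{\hat{q},\texttt{B}}$ state $\sum_z\alpha_z|z\rangle_{\hat{q}}|z\rangle_{\mathrm{link}}|\phi_z\rangle$, mirroring each type-(i) gate onto the link qubit ($X$ on the link, respectively $\CX_{r\to\mathrm{link}}$ --- local to $\texttt{B}$ precisely because of condition (b)) preserves this mirrored form and leaves the non-link qubits evolving exactly as under $C$, while type-(ii) gates need no correction and preserve the form trivially; the ending process then disentangles deterministically. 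Note that this means the correction gates are \emph{interior}, attached to each embedded gate --- exactly as in \cref{fig:embedding_example}(b), where the correction $x'$ sits between the boundary \H{}s --- rather than a single layer $L_{\texttt{A}}$ at the end with $K_{\texttt{A}}=\mathrm{id}$ as you concluded; that conclusion is an artifact of the collapsed reduction. (One can check that in your framework the uncorrected protocol satisfies $\mathcal{E}_{\hat{q},\texttt{B}}\,C\,\mathcal{S}_{\hat{q},\texttt{B}} = Z^w_{\hat{q}}\,C\,Z^w_{\hat{q}}$ for measurement outcome $w$, and pushing these Paulis through the interleaved, non-Clifford gates permitted inside $C$ does not in general terminate in an end-of-circuit product of module-local gates.)
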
 \begin{proof}
    Immediate from Corollary 30 of~\cite{Junyi2022}. Alternatively, this is a straightforward generalisation of the two embedding units shown in \cref{fig:embedding_example}.
\end{proof}

Lemma~\ref{lem:embedding_cond} provides a sufficient condition for a subcircuit to be an embedding unit. In \cite{Junyi2022} a more detailed analysis shows that condition (e) can be relaxed, but the formalisation of this more general condition is too intricate to be presented in this summary. These more general conditions can be checked on a circuit using Algorithm 35 from~\cite{Junyi2022}, which we implemented in the software \pytketdqc{} we present in this work.

Equipped with the notion of embedding units and the algorithm from~\cite{Junyi2022} to identify them, we can now build larger distributable packets. Whenever a gate commutes with the packet's starting process, embedding it requires no correction gates. Whenever we encounter an embedding unit, we apply the embedding rules from Corollary 30 of~\cite{Junyi2022} to introduce the required local correction gates; more details are provided in \cref{sec:to_pytket_circuit}.

Condition (b) from Lemma~\ref{lem:embedding_cond} has a rather subtle implication: if two embedding units on different qubits contain the same \CZ{} gate, only one of the two is embeddable, see \cref{fig:embedding_conflict}. We then say that such a pair of embedding units have an \textit{embedding conflict}; similarly, two distributable packets that contain embedding units in conflict are also said to have an embedding conflict. Resolving an embedding conflict consists of choosing which of the two distributable packets should be distributed and splitting the other one into two separate packets so that embedding the conflicting \CZ{} gate a second time is no longer necessary. An algorithm for non-local gate distribution that takes advantage of embedding and resolves embedding conflicts was proposed in \cite{Junyi2022}; we briefly review the algorithm in \cref{sec:vertex_cover_embedding}.

\begin{figure*}
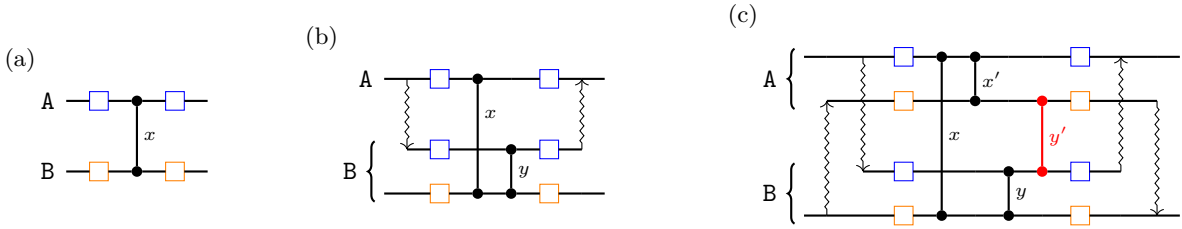

    \centering
    \begin{tikzpicture}
    \node (a) {\input{images/embedding_conflict/a}};
    \node[right=13mm of a] (b) {\input{images/embedding_conflict/b}};
    \node[right=13mm of b] (c) {\input{images/embedding_conflict/c}};
    \node[above left=-1mm and -3mm of a] {\small (a)};
    \node[above left=-1mm and -3mm of b] {\small (b)};
    \node[above left=-1mm and -3mm of c] {\small (c)};
\end{tikzpicture}
    \caption{\textbf{Embedding conflict.} (a) A simple circuit with two embedding units: one containing the \CZ{} gate and the blue \H{} gates; the other containing the \CZ{} gate and the orange \H{} gates. (b) Embedding only one of the embedding units causes no issues: the correction gate $y$ is local. However, notice that the other embedding unit (the one containing the orange \H{} gates) contains $y$ as well. (c) Since $y$ does not satisfy Lemma~\ref{lem:embedding_cond} (both of its qubits are in the same module), embedding it would create a non-local correction gate $y'$, defeating the purpose of embedding.}
    \label{fig:embedding_conflict}
\end{figure*}

For the sake of brevity, we have not discussed how to deal with situations where a certain embedding unit must be embedded within more than one distributed packet. Thanks to Corollary 14 from~\cite{Junyi2022}, we know that these situations will never cause new conflicts. A more subtle situation arises when two distributable packets $P$ and $P'$ happen to be intertwined in the sense that some gate $g \in P$ needs to be embedded within $P'$ while, at the same time, some other gate $g' \in P'$ needs to be embedded within $P$. We describe how we deal with such a situation in \cref{sec:to_pytket_circuit}.

\subsection{Non-local gate distribution via vertex cover}
\label{sec:vertex_cover}

In this section we review the literature on the subproblem of non-local gate distribution. We focus on approaches that reduce it to finding the minimum vertex cover of a graph. We begin from a version of the problem with the following simplifications:
\begin{itemize}
    \item we assume that the network of modules is fully connected,
    \item we ignore the optimisation opportunities embedding provides and
    \item we impose that a non-local gate must be implemented in either of the two modules it acts on --- unfortunately, this prevents beneficial distributions such as the one in \cref{fig:detached_gate} from being considered.
\end{itemize}
This simplified problem is presented in~\cite{gsundaram_et_al} under the name \texttt{MS-HC}; we summarise their solution in this section.
One of the contributions of the present work is the extension of their approach to the general problem where these three constraints are lifted. In particular, \cref{sec:ALAP} and \cref{sec:refinement steiner} allows us to consider heterogeneous networks of modules, we use the approach of~\cite{Junyi2022} (summarised in \cref{sec:vertex_cover_embedding}) to exploit embedding and employ the method in \cref{sec:refinement detatched} to lift the last of the constraints.

\begin{figure}
    \centering
    \begin{tikzpicture}
    \node[scale=0.9] (eq) {\Large $=$};
    \node[left=3mm of eq, scale=0.9] {
        \begin{quantikz}[column sep=3mm]
            \lstick[1]{\texttt{A}} & \ctrl{1} & \ctrl{1} & \qw & \qw \\[-7pt]
            & \dPhase{\alpha} & \dPhase{\beta} & & \\[-7pt]
            \lstick[1]{\texttt{B}} & \ctrl{-1} & \qw & \ctrl{1} & \qw \\[-7pt]
            & & & \dPhase{\gamma} & \\[-7pt]
            \lstick[1]{\texttt{C}} & \qw & \ctrl{-3} & \ctrl{-1} & \qw 
        \end{quantikz}
    };
    \node[right=2mm of eq, scale=0.9] {
        \begin{quantikz}[column sep=3mm]
            \lstick[1]{\texttt{A}} & \qw \arrow[d, squiggly] & \qw & \qw & \qw & \qw & \qw \\[5pt]
            \lstick[5]{\texttt{B}} & & \ctrl{1} & \ctrl{1} & \qw & \qw \arrow[u, squiggly] & \\[-10pt]
            & & \dPhase{\alpha} & \dPhase{\beta} & & & \\[-10pt]
            \qw & \qw & \ctrl{-1} & \qw & \ctrl{1} & \qw & \qw \\[-10pt]
            & & & & \dPhase{\gamma} & & \\[-10pt]
            & & \qw & \ctrl{-3} & \ctrl{-1} & \qw \arrow[d, squiggly] & \\[5pt]
            \lstick[1]{\texttt{C}} & \qw \arrow[u, squiggly] & \qw & \qw & \qw & \qw & \qw 
        \end{quantikz}    
    };
\end{tikzpicture}
    \caption{\textbf{Distribution with detached gate.} In the distributed circuit, the \CU{} gate with phase $\beta$ is implemented within module \texttt{B}, but originally had none of its qubits allocated to it --- we refer to it as a detached gate.}
    \label{fig:detached_gate}
\end{figure}
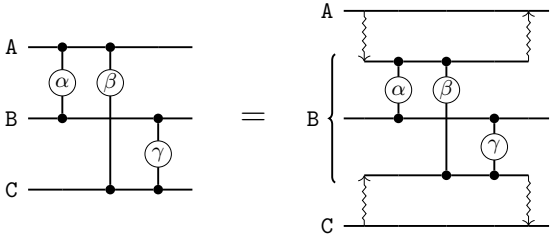

Once a qubit allocation has been chosen, all that remains to do is identify distributable packets and, for each non-local \CU{} gate, decide which of the packets it belongs to should be used to distribute it. In the absence of embedding, the first task --- which we refer to as \textit{gate packing} --- is trivial: scan the circuit qubit by qubit, from beginning to end, and find sequences of gates satisfying Lemma~\ref{lem:distributable_cond}. We shall only consider the collection of largest distributable packets, \ie{} those that are not a subset of any other distributable packet; as a consequence, each non-local \CU{} belongs to exactly two distributable packets --- one per qubit.

The second task corresponds to finding the minimum vertex cover of a graph whose vertices represent the distributable packets and where an edge between two of these corresponds to the existence of at least one non-local \CU{} gate contained in both packets. A vertex cover of a graph is a subset of its vertices such that each edge is incident to at least one vertex in the subset; thus, a vertex cover of the previous graph selects which distributable packets ought to be realised so that all non-local \CU{} gates are distributed. A minimum vertex cover of the graph would select the fewest number of distributable packets and, hence, yield the optimal distribution under the given constraints. In the appendix of~\cite{gsundaram_et_al} its authors show that the previous graph is guaranteed to be bipartite, which implies the minimum vertex cover can be found efficiently. 

The authors of~\cite{gsundaram_et_al} then considered a more general problem where non-local gates may be implemented in a detached manner, \ie{} so that distributions such as the one from \cref{fig:detached_gate} may be explored. This more general problem is not known to be reducible to a vertex cover problem on a bipartite graph. Nevertheless, the authors of~\cite{gsundaram_et_al} proposed an efficient algorithm that is guaranteed to provide a distribution only a logarithmic factor away from the best distribution achievable under the given constraints. However, said algorithm is still solving a simplified problem since it is omitting the following constraints and optimisation opportunities.
\begin{description}
    \item[Network topology:] In \cref{sec:DQC_problem} we let networks be described by arbitrary (connected) graphs. Thus, the approach should take into account the distance between modules when computing the cost of distributing non-local gates.
    \item[Bounded link qubit register:] Each module $\texttt{A}$ may have a bound to the size of its link qubit register $\epsilon(\texttt{A})$ (see \cref{sec:DQC_problem}). The resulting distribution should refrain from exceeding it.
    \item[Embedding:] The embedding technique described in \cref{sec:embedding} lets us create larger distributable packets. Thus, an algorithm using embedding is likely to cover all non-local gates using fewer packets and, hence, find a distribution that uses fewer ebits.
\end{description}

In \cite{Sundaram2022GeneralDistribution} the authors propose an algorithm that is aware of the network topology and the bound to the size of the link qubit register. Rather than a minimum vertex cover problem, they consider the dual problem of maximising the number of non-local gates covered using a fixed number of ebits while satisfying a set of linear constraints. The linear constraints are used to capture the network topology and the bound to the link qubit register. The central element of the optimisation procedure is carried out by an integer linear programming (ILP) subroutine. 
However, this approach does not take advantage of embedding.

In \cite{Junyi2022} an algorithm that exploits embedding is proposed. The algorithm is based on finding minimum vertex covers and it makes use of graph colouring to identify solutions that satisfy the bound to the link qubit register. However, the algorithm is targeted to networks containing only two modules and, consequently, has a trivial network topology. The approach to DQC we propose in the present paper takes multiple insights from this latter work, orchestrating them in a more general framework. Thus, we dedicate the following section to introduce the ideas from \cite{Junyi2022} that are relevant to us.

\subsubsection{Embedding-aware approach}
\label{sec:vertex_cover_embedding}

The approach to non-local gate distribution using minimum vertex cover can be extended to account for embedding. The means to do so were described in \cite{Junyi2022}, where detailed algorithms were provided. Once again, the first step is to identify the largest distributable packets that can be realised without the use of embedding. Then, for each distributable packet $P$, we check whether the gates that come immediately after $P$ form an embedding unit. If so, this would allow to merge $P$ with the packet appearing immediately after the embedding unit, creating a larger distributable packet. The algorithm identifies the largest distributable packets that can be achieved by such merging, and records the embeddings that are required to do so. This task simply requires us to carry out a scan over the circuit and, hence, it scales linearly with the dimensions of the circuit. 

The resulting distributable packets are then arranged in a graph $G$ as in the case of the standard vertex cover approach: its vertices correspond to each of the packets and its edges correspond to common non-local \CU{} gates between them. It may seem that it only remains to find a minimum vertex cover of $G$, but this would not account for embedding conflicts (see \cref{fig:embedding_conflict}). Instead, we need to define an additional graph $K$ whose vertices correspond to the embeddings that were used when merging distributable packets, where an edge between two such embeddings appears if and only if the embeddings are in conflict. With these two graphs $G$ and $K$ at hand, a sketch of the algorithm is presented below.

\begin{enumerate}
    \item Find a minimum vertex cover $\mathcal{C}_G$ of $G$.
    \item Find the subset $\kappa$ of embeddings required to implement all of the distributable packets in $\mathcal{C}_G$.
    \item Extract the subgraph $K_\kappa$ of $K$ whose vertex set is $\kappa$ and whose edges are those from $K$ that connect vertices in $\kappa$. 
    \item Obtain a minimum vertex cover $\mathcal{C}_K$ of $K_\kappa$: this is the smallest set of embeddings that we must give up in order to resolve all embedding conflicts incurred by $\mathcal{C}_G$.
    \item For each element in $\mathcal{C}_K$ --- an embedding ---, identify which distributable packet $P \in \mathcal{C}_G$ used it (there is exactly one) and update $\mathcal{C}_G$ replacing $P$ with two distributable packets: one containing all of the gates in $P$ that come before the \CZ{} gate responsible for the embedding conflict and another with the gates in $P$ that come afterwards.
\end{enumerate}

The resulting set of distributable packets $\mathcal{C}_G$ is no longer a \textit{minimum} vertex cover of $G$, but it is a vertex cover with no embedding conflicts. Thus, it can be used to generate a valid distribution. This approach is not guaranteed to return the overall optimal solution, but it does resolve the embedding conflicts of a given vertex cover of $G$ in an optimal way. In an attempt to find better overall solutions, we may choose to repeat the routine above for multiple distinct vertex covers of $G$ and pick the best among them \cite{Junyi2022}.

The algorithms presented in \cite{Junyi2022} for the tasks just described were designed for networks with exactly two modules. Generalising these to networks of multiple modules is immediate: it is sufficient that our conditions for identifying distributable packets (Lemma~\ref{lem:distributable_cond}) and embedding units (Lemma~\ref{lem:embedding_cond}) required that all of their \CU{} gates acted on the same two modules. This guarantees that both $G$ and $K$ are bipartite graphs, so we may find a minimum vertex cover for them efficiently. The fact that these graphs are bipartite graphs is not trivial, but it follows from the same argument the authors of \cite{gsundaram_et_al} used for their bipartite graph for the \texttt{MS-HC} problem. 

The authors of \cite{Junyi2022} propose how to take into account the bound to the link qubit register size --- via graph colouring -- solutions that exceed such a bound. Then they present an efficient way of splitting the offending distributable packets so that the number of EJPP protocols that are simultaneously active is reduced, at the cost of increasing the total number of ebits consumed. Such an approach is beyond the scope of the present paper and we omit further details for the sake of brevity.

\subsection{Intermediate representation of distribution}
\label{sec:distribution_as_hypergraphs}

Throughout this section we have discussed multiple approaches aimed at optimising different aspects of the DQC problem. We have considered multiple abstractions --- \eg{} distributable packets, embedding units, embedding conflicts, hypergraphs, \textit{etc\@.} --- each tailored to be as natural as possible to the approach at hand. Our goal in this paper is to propose an approach that can take advantage of the insights of each of these optimisation methods. To do so, we require an intermediate representation where the outcome of each of these optimisation methods can be represented. Such an intermediate representation could simply be a partially distributed circuit; however, a more abstract representation would be preferable to minimise the overhead of dealing with superfluous low level details --- such as the correction gates required for an embedding unit, the exact placement of a starting process within a circuit, the reuse of link qubits, \textit{etc\@.} --- that could easily be deferred to the final step of the workflow. Fortunately, all of what has been reviewed in this section can be captured within the framework of hypergraphs discussed in \cref{sec:hypergraph}, which makes it a natural choice for our intermediate representation of a distribution.

\begin{definition}[IR of distributions] \label{def:distribution}
    A \texttt{Distribution} contains the following information.
    \begin{itemize}
        \item A hypergraph of $\lvert Q \rvert \!+\! \lvert \mathcal{G} \rvert$ vertices, where $Q$ is the set of qubits in the original circuit and $\mathcal{G}$ is its collection of \CU{} gates. We refer to these as qubit-vertices and gate-vertices respectively, as established in \cref{sec:hypergraph}.
        \item An allocation map $\phi \colon Q \cup \mathcal{G} \to V$, where $V$ is the set of modules in the network.
    \end{itemize}
    Additionally, we include the original circuit and the network of modules, which remain unchanged throughout the workflow.
\end{definition}

The purpose of including the original circuit and the network of modules within the \texttt{Distribution} is to be able to assess the ebit cost (see \cref{sec:ALAP}). Furthermore, the information contained in \texttt{Distribution} is all we require to generate the corresponding distributed circuit; we explain how to do so in Appendix~\ref{sec:to_pytket_circuit}. Notice that the allocation map $\phi$ determines a partition of the hypergraph. Below, we briefly discuss how the different abstractions considered in this section can be captured within a \texttt{Distribution}. Recall that, by construction of the hypergraph in \cref{sec:hypergraph}, each hyperedge has a single qubit-vertex and each gate-vertex is present in exactly two hyperedges.

\begin{description}
    \item[Non-local gate:] a gate $g \in \mathcal{G}$ is non-local if and only if its adjacent qubit-vertices $q$ and $q'$ satisfy $\phi(q) \not = \phi(q')$.
    \item[Detached gate:] a gate $g \in \mathcal{G}$ is detached if and only if its adjacent qubit-vertices $q$ and $q'$ satisfy $\phi(q) \not = \phi(g)$ and $\phi(q') \not = \phi(g)$.
    \item[Distributable packet:] a distributable packet $P$ rooted on $\hat{q}$ can be represented as a hyperedge with qubit-vertex $\hat{q}$ and the gate-vertices corresponding to the gates in $P$. In general, a hyperedge may contain the union of any number of distributable packets as long as they are all rooted on the same qubit $\hat{q}$. Whereas it is necessary for all gates of a distributable packet $g \in P$ to be allocated to the same module $\phi(g)$, this requirement does not apply to hyperedges. As a consequence, we can extract the distributable packets comprising a hyperedge by grouping its gate-vertices in terms of where they are allocated to.
    \item[Embedding unit:] if two distributable packets may be merged together by embedding the gates between them, the same can be said about merging the hyperedges the packets belong to. As such, embedding techniques alter the hypergraph itself, increasing the size of hyperedges for the sake of reducing their number. Embedding units can be retrieved on demand by inspecting the subcircuit between any two gates on the same hyperedge. Since \texttt{Distribution} is meant to capture valid distributions, we assume no embedding conflicts are incurred; it is the responsibility of the optimising method to guarantee that this is satisfied.
\end{description}

Verifying that the bound to computation registers $\omega(\texttt{A})$ of each module $\texttt{A} \in V$ is satisfied is straightforward: simply count the number of $q \in Q$ such that $\phi(q) = \texttt{A}$. The cost in the number of ebits can be inferred using the methods presented in \cref{sec:ALAP}. Unfortunately, the satisfaction of bound to link qubit registers $\epsilon(\texttt{A})$ cannot be easily checked using our intermediate representation; instead, we need to generate its corresponding distributed circuit (as detailed in Appendix~\ref{sec:to_pytket_circuit}) and count the number of link qubits used --- recall that this is not the same as the number of ebits, since space in the link qubit registers may be reused. This is not an obstacle to our optimisation approaches since none of them consider the bound $\epsilon(\texttt{A})$ within their routines: satisfaction of this bound is deferred to a final pass at the end of the workflow that acts directly on the distributed circuit and is described in Appendix~\ref{sec:limited links}.

\section{Distribution techniques}
\label{sec:solutions}

In this section we discuss the novel distribution techniques that we have implemented in \pytketdqc{}, our DQC tool, available at \repo{}. Our tool is designed as an extension to \texttt{pytket}, the Python interface of the TKET compiler \cite{Sivarajah_2020} and, as such, it may easily be integrated in a full compilation stack. 

Our techniques are orchestrated together in the default workflows detailed on \cref{sec:benchmarks workflows}. The user may choose to run these default workflows or create a custom one, combining the distribution techniques available as they prefer. Any DQC workflow making use of \pytketdqc{} should contain the following steps, in this precise order.

\begin{description}
    \item[Rebase.] Rewrite the circuit to an equivalent one in the gateset \gateset{}. Within \pytketdqc{} we provide an automated method to do so, based upon the rebase passes provided within \texttt{pytket}.
    
    \item[Qubit allocation.] Assign to which module each qubit of the circuit should be allocated, adhering to the bound on the size of the computation register. Our techniques are based on the hypergraph representation discussed in \cref{sec:distribution_as_hypergraphs}, and the user may choose between an annealing approach or a third-party hypergraph partitioner with a greedy refinement, both of which are detailed in \cref{sec:hypergraph_heterogeneous}. Both of these take advantage of Steiner trees as discussed in \cref{sec:Steiner}.
    
    \item[Gate packing.] This step is meant to identify opportunities where embedding may be used, passing this information to the next step. In particular, we implemented the algorithm proposed in \cite{Junyi2022} for this task, whose core ideas are summarised in \cref{sec:embedding}.
    
    \item[Non-local gate distribution.] Two options are available: either use the solution provided by the qubit allocation step --- distribute gates according to which modules their gate-vertices are assigned to --- or make use of the vertex cover approach proposed in \cite{Junyi2022} and summarised in \cref{sec:vertex_cover_embedding}. The former option will not take advantage of embedding, but will make use of Steiner trees; conversely, the latter option will consider embedding but not Steiner trees. Neither of these guarantee satisfaction of the bound to the link qubit registers; this is deferred to the last step of the workflow.
    
    \item[Refinement.] The previous step makes use of either the embedding technique or Steiner trees. During this refinement step, the user can choose to apply any number of the passes described in \cref{sec:refinement}. These refinement passes further improve upon the current solution by taking advantage of readily available opportunities for optimisation using Steiner trees and embedding. The key insight that lets us combine these two seemingly mutually exclusive techniques is described in \cref{sec:ALAP}. A refinement that lets us take advantage of detached gates (as in \cref{fig:detached_gate}) is also provided.
    
    \item[Circuit generation.] Our tool provides methods for the automatic generation of the distributed circuit as a \texttt{pytket} circuit or QASM file. We keep track of the occupancy of the link qubit register of each module and reuse link qubits after the EJPP protocol that employed them terminates. Thus, even though our methods do not guarantee satisfaction of a bound to communication memory, the required memory capacity is not directly dependent on the number of EJPP protocols carried out, but rather on the maximum number of EJPP protocols simultaneously active at any given time. As shown in \cref{sec:limited links}, the size of the link qubit registers remains manageable, even if the user does not specify a bound. If the user does specify a bound to link qubit registers, we use the routine described in \cref{sec:limited links} to update the distributed circuit as necessary to satisfy the bound, at the cost of increasing the number of ebits required.
\end{description}

Moreover, our tool provides some basic functions for analysing the distributed circuit, such as counting the number of ebits used and the qubit occupancy of the registers of each module. We also provide a method to verify the equivalence between the original circuit and the distributed one, based on~\cite{PyZX}, which is automatically called at the end of the circuit generation step. 

\subsection{Gate distribution using Steiner trees}
\label{sec:Steiner}

One approach to implementing a distribution hyperedge between two non adjacent modules in a heterogeneous network would be to first construct a single ebit between the relevant modules. This could be done via entanglement swapping; consuming ebits between intermediate modules in the network to build the single required ebit. This single ebit can then be used to perform the EJPP protocol at a total cost in ebits equal to the shortest path in the network between the two modules. In the case where the hyperedge is distributed between three modules, which is to say two distributable packets, and so EJPP processes, are required, the e-bit cost of this approach is the sum of the cost of constructing two ebits. In this case this would be the sum of the shortest paths in the network between the module from which the hyperedge is being distributed, and the two other modules.

During the above described technique, the proxy link qubits in the intermediate modules are measured before the non-local gates have been applied. Alternatively, as these disentangling operation do not affect the qubits which are acted on by the non-local gate, they may be delayed until after the non-local gates have been enacted. Additionally, the starting and ending process commute with the controls of the distributed gates. This means that when non-local gates belong to the same hyperedge are distributed to separate modules, all starting processes can be performed before the gates are enacted, and all ending process may be performed after all gates are acted. This process is depicted in \cref{fig:steiner ejpp}.

Reusing intermediate link qubits in the aforementioned way reduces the e-bit cost of the distribution to the size of the smallest subtree of the module network which includes the modules of concern. This subgraph is known as a Steiner tree. This approach extends to Steiner trees of arbitrary shape, as exemplified in \cref{fig:steiner ejpp}. Circuit distribution in \pytketdqc{} makes use of Steiner trees instead of entanglement swapping, allowing us to make savings upon a naive application of the EJPP protocol.

\begin{figure*}
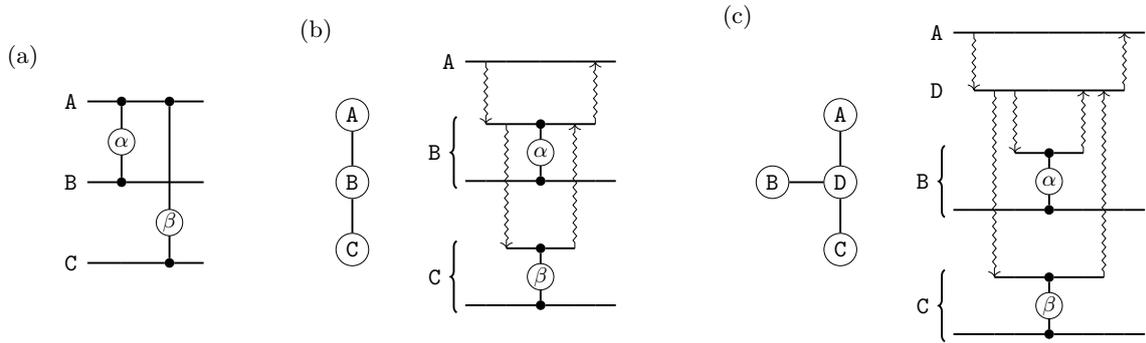

    \centering
    \begin{tikzpicture}
    \node[scale=0.9] (input) {\input{images/steiner/input}};
    \node[right=13mm of input, scale=0.9] (line) {\input{images/steiner/line}};
    \node[right=13mm of line, scale=0.9] (T) {\input{images/steiner/T}};
    \node[above left=0mm and 0mm of input] {\small (a)};
    \node[above left=-3mm and -3mm of line] {\small (b)};
    \node[above left=-5mm and -3mm of T] {\small (c)};
\end{tikzpicture}
    \caption{\textbf{Gate distribution via EJPP embedding with Steiner trees.} (a) An input circuit where each qubit is allocated  to a different module. (b) The distribution of (a) onto a line network. This should be compared to the approach of using entanglement swapping, where the number of required ebits would be 3. In this case disentangling operation have been delayed until after the non-local gates have been enacted, and starting processes related to $\beta$ have been commuted to the start of the computation. This results in a total ebit cost of 2 (c) The distribution of (a) in a T shaped network with entangling and disentangling operations acting along the edges of the Steiner tree connecting the relevant modules.}
    \label{fig:steiner ejpp}
\end{figure*}

Note that it is not possible to safely commute entangling and disentangling operation as described in the case when Steiner trees are combined with embedding units containing \H{} gates. This is discussed in in \cref{sec:ALAP}. 



\subsection{Combining embedding and Steiner trees}
\label{sec:ALAP}

The approach proposed in \cref{sec:Steiner} efficiently generates the entanglement sharing required for the distribution of the gates in a hyperedge, using Steiner trees. To do so, we maintain the entanglement of some proxy link qubits throughout the whole duration of the collection of EJPP processes. Unfortunately, if the hyperedge includes any distributable packet that requires some embedding, such as the example in \cref{fig:ALAP}, maintaining the entanglement of these proxy link causes a problem: correction gates acting on them will be required. As shown in \cref{fig:ALAP} these correction gates may be non-local, thus creating the need for extra ebits to implement them, defeating the purpose of embedding. 

\begin{figure*}
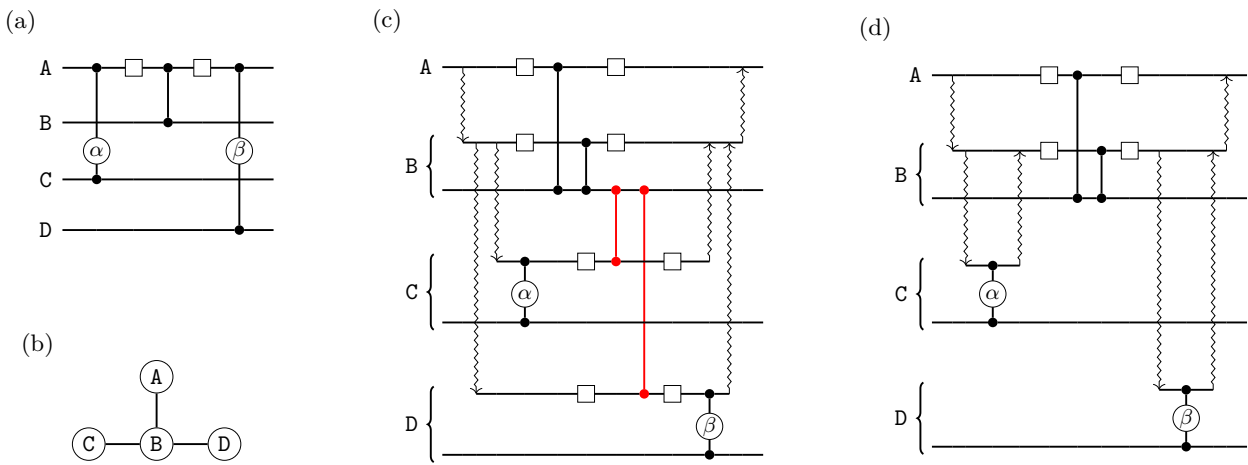

    \centering
    \begin{tikzpicture}
    \node[scale=0.9] (input) {\input{images/ALAP/input}};
    \node[below=13mm of input, scale=0.9] (network) {
        \begin{tikzpicture}
            \node[draw, circle, thin, minimum size=1.2em, inner sep=2pt] (A) {\texttt{A}};
            \node[below=5mm of A, draw, circle, thin, minimum size=1.2em, inner sep=2pt] (B) {\texttt{B}};            
            \node[left=5mm of B, draw, circle, thin, minimum size=1.2em, inner sep=2pt] (C) {\texttt{C}};
            \node[right=5mm of B, draw, circle, thin, minimum size=1.2em, inner sep=2pt] (D) {\texttt{D}};
            \draw[thick] (A) -- (B);
            \draw[thick] (C) -- (B);
            \draw[thick] (D) -- (B);
        \end{tikzpicture}
    };
    \node[below right=-28mm and 13mm of input, scale=0.9] (wrong) {\input{images/ALAP/wrong}};
    \node[right=13mm of wrong, scale=0.9] (right) {\input{images/ALAP/right}};
    \node[above left=0mm and -3mm of input] {\small (a)};
    \node[above left=-2mm and 0mm of network] {\small (b)};
    \node[above left=0mm and -3mm of wrong] {\small (c)};
    \node[above left=0mm and -3mm of right] {\small (d)};
\end{tikzpicture}
    \caption{\textbf{Combining embedding and Steiner trees.} (a) An input circuit where each qubit is allocated to a different module, (b) the topology of the network of modules. (c) An equivalent circuit generated using the approach from \cref{sec:Steiner}: two non-local correction gates arise, drawn in red. (d) An equivalent circuit generated by Algorithm~\ref{alg:ALAP}.}
    \label{fig:ALAP}
\end{figure*}

There is a simple solution to our compatibility issue: maintain the entanglement of these proxy link qubits for as long as possible to maximise the use of Steiner trees, but disentangle them right before an embedding unit so that they do not interfere with it. The implementation of such an intuition is sketched in Algorithm~\ref{alg:ALAP}.

\begin{figure}
    \begin{algorithm}[H]
        \caption{Distribution with embedding and Steiner trees}
        \label{alg:ALAP}
        \textbf{Input:} hyperedge (\textsf{hedge}), allocation map ($\phi$)
        
        \textbf{Output:} partly distributed circuit (\textsf{dist\_circ}).

        \hrulefill
        \begin{algorithmic}[1]
            \State \textsf{tree} $\leftarrow$ \textsf{hedge}'s Steiner tree (see \cref{sec:Steiner})
            \State \textsf{linked\_modules} $\leftarrow \varnothing$ 
            \State \textsf{hedge\_circ} $\leftarrow$ extract as in Remark~\ref{rmk:ALAP}
            \State \textsf{iter} $\leftarrow$ \textsf{hedge\_circ.iterator()}
            \State 
            \State \textsf{dist\_circ} $\leftarrow$ empty circuit
            \While{\textsf{iter.current} \textbf{not} \textsf{null}}
                \State \textsf{gate} $\leftarrow$ \textsf{iter.current}
                \State
                \If{\textsf{gate} $\in$ \textsf{hedge}} {\color{gray}\Comment{Distribute}}
                    \If{$\phi(\textsf{gate}) \not\in \textsf{linked\_modules}$}
                        \State \textsf{dist\_circ} $\leftarrow$ insert starting process
                        \State \textsf{linked\_modules} $\leftarrow$ add $\phi(\textsf{gate})$
                    \EndIf
                    \State \textsf{dist\_circ} $\leftarrow$ insert distributed \textsf{gate}
                \State
                \ElsIf{\textsf{gate} is \H{}} {\color{gray}\Comment{Embed}}
                    \State \textsf{embedding\_unit} $\leftarrow$ \textsf{gate}'s embedding unit
                    \State \textsf{remote\_module} $\leftarrow$ module $\texttt{B}$ from Lemma~\ref{lem:embedding_cond}
                    \For{\textsf{module} $\in$ \textsf{linked\_modules}}
                        \If{\textsf{module} $\not =$ \textsf{remote\_module}}
                            \State \textsf{dist\_circ} $\leftarrow$ insert ending process
                        \EndIf
                    \EndFor
                    \State \textsf{linked\_modules} $\leftarrow$ \textsf{\{remote\_module\}}
                    \State \textsf{dist\_circ} $\leftarrow$ insert \textsf{embedding\_unit}
                    \State \textsf{dist\_circ} $\leftarrow$ insert correction gates
                    \State \textsf{iter} $\leftarrow$ move at the end of \textsf{embedding\_unit}
                \State
                \Else {\color{gray}\Comment{Skip}}
                    \State \textsf{dist\_circ} $\leftarrow$ insert \textsf{gate} unchanged
                \EndIf
                \textsf{iter.next()}
            \EndWhile
        \end{algorithmic}
    \end{algorithm}
\end{figure}

\cref{fig:ALAP}d shows the result of running Algorithm~\ref{alg:ALAP} on a simple circuit. The proxy link qubit of module \texttt{B} is maintained throughout the circuit, whereas the link qubits of modules $\texttt{C}$ and $\texttt{D}$ are only maintained as long as necessary to implement the two \CU{} gates. Maintaining the link qubit of module \texttt{B} saves one ebit, whereas our management of the link qubits of modules $\texttt{C}$ and $\texttt{D}$ avoids the need for non-local correction gates that would otherwise be required (see \cref{fig:ALAP}c). Thus, it is possible to define distributions that combine the techniques of embedding and Steiner trees, and Algorithm~\ref{alg:ALAP} is capable of generating the corresponding circuit. 

We can count the number of ebits consumed in the distributed circuit outputted by Algorithm~\ref{alg:ALAP}, thus obtaining the exact ebit cost of the distribution. This can be done for each cut hyperedge in our hypergraph, and it is straightforward to check that Algorithm~\ref{alg:ALAP} runs in time $\bigO{g_d + g_e}$ where $g_d$ is the number of gate-vertices in the hyperedge and $g_e$ is the number of gates that need to be embedded to realise its distribution. Thus, this provides an efficient function to calculate the exact ebit cost of a given cut hyperedge, using both embedding and Steiner trees. This cost function will be used by the combinatorial optimisation approaches of \cref{sec:refinement} which will be the ones to ultimately decide how each non-local gates should be distributed.

\begin{remark} \label{rmk:ALAP}
    Algorithm~\ref{alg:ALAP} iterates over the hyperedge's subcircuit (\texttt{hedge\_circ}): given a hyperedge whose qubit-vertex is $\hat{q}$, its subcircuit is the sequence of gates from the original circuit that contains all of the gates corresponding to gate-vertices of the hyperedge and every gate in between these that acts on $\hat{q}$. 
    The hyperedge given to Algorithm~\ref{alg:ALAP} as input is required to be valid, in the sense that every gate in its subcircuit is either distributable or embeddable. We can verify this ahead of time by checking the conditions from Lemma~\ref{lem:distributable_cond} (with the amend from Remark~\ref{rmk:distributable_cond}) and Lemma~\ref{lem:embedding_cond} respectively.
\end{remark}


\subsection{Partitioning on heterogenous networks}
\label{sec:hypergraph_heterogeneous}

In \cref{sec:hypergraph} we reviewed an approach that reduces the DQC problem on fully connected networks to hypergraph partitioning~\cite{Andr_s_Mart_nez_2019}. In the case of heterogeneous networks, the DQC problem still reduces to (a version of) hypergraph partitioning, but the cost function of a partition is different --- since we need to consider the distance between modules --- and we must filter out invalid solutions where the module's computation register capacity is exceeded. In this section we propose two approaches to solve this alternative version of hypergraph partitioning and, thus, the DQC problem on heterogeneous networks. 

Both of our approaches start from an initial partition and apply rounds of updates to it, guided by the cost function defined in \cref{sec:ALAP}. On each round, vertices of the hypergraph are moved from their assigned module to a different one; then, the cost of every hyperedge containing a reallocated vertex is updated. We can calculate the gain of the moves as the difference between the new cost and the previous cost. Depending on the gain and the approach used, the moves will be committed or rolled back. Since calculation of the cost function from \cref{sec:ALAP} requires finding Steiner trees on the network's graph --- which is a non-trivial computation --- we keep a cache of already computed Steiner trees. 

Recall that our hypergraphs have two kinds of vertices: qubit-vertices and gate-vertices. The allocation of a qubit-vertex to a module fills up one slot of the module's computation register, whereas the allocation of gate-vertices do not affect the computation register. Consequently, we assign weight $1$ to qubit-vertices and weight $0$ to gate-vertices and filter out partitions where the sum of weights in a module exceeds the corresponding module's computation register capacity. If a move would cause the capacity of a module to be exceeded, we select a qubit-vertex on the offending module and swap it with the vertex we intended to move. Our approaches assume unbounded link qubit registers, unlike~\cite{Sundaram2022GeneralDistribution}. In contrast, we make use of Steiner trees as discussed in~\cref{sec:Steiner}, tapping into optimisation opportunities not considered in the latter work.

\subsubsection{Simulated annealing}
\label{sec:annealing}

Simulated annealing is a stochastic optimisation algorithm; modifying an existing solution by randomly searching its neighbourhood. This search process is repeated iteratively, with the working solution updated if a lower cost solution is found. The solution may also be updated with some probability if the cost is higher, which prevents becoming trapped in local optima. The probability of accepting a worse solution falls with each iteration, encouraging that the region of the global optimum be found early on, after which the optimum itself is isolated.

In particular, the initial circuit distribution we use assigns qubits to random modules which have space for them, and assigns gate-vertices to random modules as well. Each step moves a random vertex in the distribution hypergraph to a random module. In the case of qubit-vertices this may require that a qubit in the module be swapped out to make room. The new distribution is accordingly updated depending on the new cost of the distribution.

Each iterations of the annealing procedure makes use of the cost function defined in \cref{sec:ALAP} to accept or reject an update to the distribution. As such, the scheme considers heterogeneous networks and Steiner trees in the first instance. It will not however update the distributable packets, and so considers embedding only in so far as the initial distribution take it into account. In \cref{sec:benchmarks} the initial distribution does not take embedding into consideration. Since annealing is a very general purpose tool and not well optimised to the problem of concern, we do not expect it to perform as well or as quickly as other specialised tool. The technique is however very versatile, and could be easily adapted to other similar problem. Additionally our implementation in \pytketdqc{} avoids dependencies on other third party libraries.

\subsubsection{Boundary reallocation} 
\label{sec:boundary_realloc}

The initial solution of this approach is computed using \KaHyPar{}~\cite{KaHyPar}, a state-of-art hypergraph partitioner that has the option to fix the maximum vertex weight each partition block can hold. Thus, its solution already provides a valid distribution, in the sense that it does not exceed the computation register capacity of the modules. However, the solution is optimised according to the wrong cost function, since it is assuming an all-to-all network topology. We refine the solution applying a greedy algorithm guided by the cost function defined in \cref{sec:ALAP}, improving the allocation of vertices on the boundary between partition blocks. 

On each round, we collect all of the vertices in the hypergraph that belong to a hyperedge cut by the partition --- the boundary of the partition. For each vertex $v$ in said boundary we find all of the modules that $v$ has a neighbour in; we then calculate the gain of moving $v$ to each of these modules and pick the most advantageous move (with ties broken randomly) or, if all of them are detrimental, we choose not to move $v$. A round finishes when this routine has been run once for each vertex in the boundary. Thus, each round generates a new partition and the cost of its distribution is decreased monotonically.

There is no attempt to escape local minima. The initial solution provided by \KaHyPar{} --- which does have strategies to avoid local minima~\cite{KaHyPar} --- already identifies groups of qubits that should be allocated to the same module; such grouping is a property of the circuit and hence, equally valid in the context of heterogeneous networks. Unfortunately, our greedy refinement struggles to move vertices that have many neighbours within its allocated module but few in other modules. We expect this to be a noticeable limitation in the case of networks resembling a line graph, where some of these immobile vertices may be stuck on the ends of the network. In practice, however, we expect that modules will be arranged in a small-world network\footnote{In a small-world network of $N$ nodes, few of them are adjacent to each other, but the path between any two nodes tends to be of length $\log N$. Small-world networks are common in engineering due to their logarithmic scaling average distance, which reduces communication bottlenecks~\cite{SmallWorldNetEfficiency}.} such as a hypercube, where the allocation of a few immobile vertices is not crucial thanks to the network's small average distance. In such cases, the potential for optimisation would primarily come from making smart choices of where the vertices that do not strongly belong to any of the modules (\ie{} those in the boundary of the partition), taking into account the topology of the network.

\section{Benchmarks}
\label{sec:benchmarks}

Here we present the results of benchmarking the methods described in \cref{sec:solutions}, comparing them to \cite{gsundaram_et_al}. We describe the networks, circuits, and distribution workflows used in \cref{sec:benchmarks networks}, \cref{sec:benchmarks circuits} and \cref{sec:benchmarks workflows} respectively. The results of the benchmarks are shown and discussed in \cref{sec:benchmarks results}.

\subsection{Networks}
\label{sec:benchmarks networks}

The following architectures are used in the experiments of \cref{sec:benchmarks results}. Generator methods for these networks are available within \pytketdqc{}. 
\begin{description}
    \item[Homogeneous:] All modules are directly connected to all other modules. All modules contain the same number of qubits, and no bound is set on the number of link qubits available in each module. This models an idealised network, and is exemplified in \cref{fig:benchmarks networks homogeneous}.
\end{description}
We refer to the following collectively as \emph{heterogeneous networks}. We will generate random instances of heterogeneous networks, and they are designed to be representative of real world networks.
\begin{description}
    \item[Unstructured:] Modules are connected according to edges in random Erd\'os–R\'enyi graphs, where each possible edge in the graph is added with a fixed probability. In our case we post-select to generate only connected graphs. This is the most common notion of random networks, and is exemplified in \cref{fig:benchmarks networks random}.
    
    \item[Scale-free:] The distribution of node degrees in a scale-free network follows a power law. Such networks have few nodes, called hubs, with high degree. This is a common model for networks, including the World Wide Web \cite{Baraba_si_1999}. They can be generated using preferential attachment, where high degree nodes are more likely to receive new edges as nodes are added. This is the case for the Barab\'asi–Albert model \cite{Baraba_si_1999} of scale-free networks, which we use to generate them here.\footnote{We find this broad class of networks to be a well motivated example for the purposes of our comparison. However, practical considerations give subdivisions of the class of scale-free networks \cite{doyle2005robust}. A fine grained analysis of the resulting impact on quantum circuit distribution would be of interest.} Scale-free networks are exemplified in \cref{fig:benchmarks networks scale free}.
    
    \item[Small-world:] The characteristic path lengths of small-world networks are small, while the clustering coefficient is large \cite{Watts1998, SmallWorldNetEfficiency}. This is compared to random Erd\'os–R\'enyi graphs which have small characteristic path and small clustering coefficient. Unlike Scale-free networks, small-world networks do not include hub nodes. Such networks are used to model social networks and are prevalent in engineering due to their communication efficiency~\cite{SmallWorldNetEfficiency}. We generate them using the Watts–Strogatz model \cite{Watts1998}, and exemplify them in \cref{fig:benchmarks networks small world}.
\end{description}

The particular sizes of the networks we use are listed in the results of \cref{sec:benchmarks results}. In the case of the heterogeneous networks, edge probabilities are set so that the average number of edges incident on each module is two, and qubits are assigned at random to each module. We take that the size of the link qubit register is the largest integer smaller than the average number of computational qubits per module. This means that one would not typically be able to fit the computational qubits of one network module into the link qubit register of another, and as such that networking the modules together results in an increase in the number of computation qubits. Bounds to the size of the link qubit register are not considered in \cref{sec:benchmarks results}, but are explored in \cref{sec:limited links}.

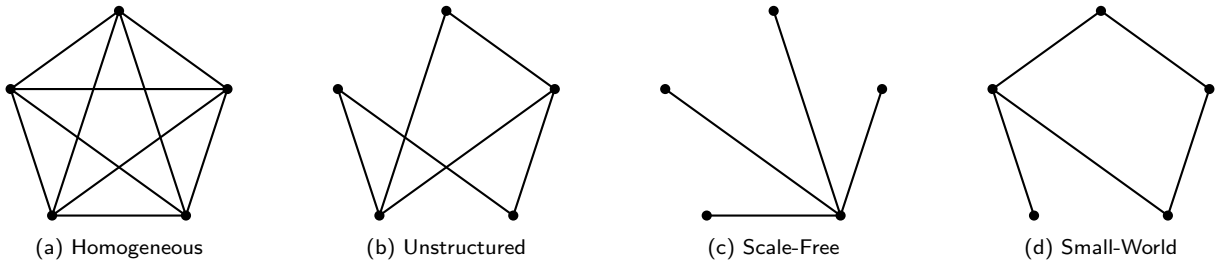
\begin{figure*}
    \centering
    \begin{subfigure}[b]{0.24\textwidth}
        \centering
        \begin{tikzpicture}[thick, scale=1.5]
            \coordinate (a) at (0.587,-0.809);
            \coordinate (b) at (0.951,0.309);
            \coordinate (c) at (0,1);
            \coordinate (d) at (-0.951,0.309);
            \coordinate (e) at (-0.587,-0.809);

            \filldraw[black] (a) circle (1pt);
            \filldraw[black] (b) circle (1pt);
            \filldraw[black] (c) circle (1pt);
            \filldraw[black] (d) circle (1pt);
            \filldraw[black] (e) circle (1pt);
            
            \draw (a) -- (b);
            \draw (a) -- (c);
            \draw (a) -- (d);
            \draw (a) -- (e);

            \draw (b) -- (c);
            \draw (b) -- (d);
            \draw (b) -- (e);

            \draw (c) -- (d);
            \draw (c) -- (e);

            \draw (d) -- (e);
        \end{tikzpicture}
        \caption{Homogeneous}
        \label{fig:benchmarks networks homogeneous}
    \end{subfigure}
    \hfill
    \begin{subfigure}[b]{0.24\textwidth}
        \centering
        \begin{tikzpicture}[thick, scale=1.5]
            \coordinate (a) at (0.587,-0.809);
            \coordinate (b) at (0.951,0.309);
            \coordinate (c) at (0,1);
            \coordinate (d) at (-0.951,0.309);
            \coordinate (e) at (-0.587,-0.809);

            \filldraw[black] (a) circle (1pt);
            \filldraw[black] (b) circle (1pt);
            \filldraw[black] (c) circle (1pt);
            \filldraw[black] (d) circle (1pt);
            \filldraw[black] (e) circle (1pt);
            
            \draw (a) -- (b);
            \draw (a) -- (d);

            \draw (b) -- (c);
            \draw (b) -- (e);

            \draw (c) -- (e);

            \draw (d) -- (e);
        \end{tikzpicture}
        \caption{Unstructured}
        \label{fig:benchmarks networks random}
    \end{subfigure}
    \hfill
    \begin{subfigure}[b]{0.24\textwidth}
        \centering
        \begin{tikzpicture}[thick, scale=1.5]
            \coordinate (a) at (0.587,-0.809);
            \coordinate (b) at (0.951,0.309);
            \coordinate (c) at (0,1);
            \coordinate (d) at (-0.951,0.309);
            \coordinate (e) at (-0.587,-0.809);

            \filldraw[black] (a) circle (1pt);
            \filldraw[black] (b) circle (1pt);
            \filldraw[black] (c) circle (1pt);
            \filldraw[black] (d) circle (1pt);
            \filldraw[black] (e) circle (1pt);
            
            \draw (a) -- (b);
            \draw (a) -- (c);
            \draw (a) -- (d);
            \draw (a) -- (e);



        \end{tikzpicture}
        \caption{Scale-Free}
        \label{fig:benchmarks networks scale free}
    \end{subfigure}
    \hfill
    \begin{subfigure}[b]{0.24\textwidth}
        \centering
        \begin{tikzpicture}[thick, scale=1.5]
            \coordinate (a) at (0.587,-0.809);
            \coordinate (b) at (0.951,0.309);
            \coordinate (c) at (0,1);
            \coordinate (d) at (-0.951,0.309);
            \coordinate (e) at (-0.587,-0.809);

            \filldraw[black] (a) circle (1pt);
            \filldraw[black] (b) circle (1pt);
            \filldraw[black] (c) circle (1pt);
            \filldraw[black] (d) circle (1pt);
            \filldraw[black] (e) circle (1pt);
            
            \draw (a) -- (b);
            \draw (a) -- (d);

            \draw (b) -- (c);

            \draw (c) -- (d);

            \draw (d) -- (e);
        \end{tikzpicture}
        \caption{Small-World}
        \label{fig:benchmarks networks small world}
    \end{subfigure}
    \caption{\textbf{Example network architecture graphs.} Vertices indicate modules. Edges indicate connections along which ebits can be established.}
    \label{fig:benchmarks networks}
\end{figure*}

\subsection{Circuits}
\label{sec:benchmarks circuits}

The following classes of randomly generated circuits are considered during the experiments of \cref{sec:benchmarks results}.
\begin{description}
    \item[CZ Fraction:] Consisting of $d$ layers of gates, with each layer built from \H{} and \CZ{} gates. A parameter \texttt{cz\_fraction} determines the proportion of the qubits on which \CZ{} gates are acted in each layer. These benchmark circuits are already in the gateset considered by the distribution workflows studied, and so provide a controlled way to study the performance of these workflows. \CZ{} fraction circuits are introduced in \cite{gsundaram_et_al}, exemplified in \cref{fig:benchmarks circuits cz fraction}, and detailed in \cref{alg:benchmark circuits cz fraction}.
\end{description}
While CZ Fraction circuits were designed for the study of DQC workflows, the following are inspired by popular protocols.
\begin{description}
    \item[Quantum Volume:] Consists of $d$ \emph{layers} of random two-qubit gates, each acting on different bipartitions of the qubits, and similar to those used for the quantum volume benchmark \cite{Cross_2019}. By utilising uniformly random two-qubit unitaries and all-to-all connectivity, Quantum Volume Circuits provide a comprehensive benchmark. While CZ Fraction and Pauli Gadget circuits naturally decompose to contain \CZ{} gates when rewritten in \gateset{}, Quantum Volume circuits will contain \CU{} gates of a variety of rotation angles. This exemplifies the capacity for \pytketdqc{} to distribute such gates. Quantum Volume circuits are exemplified in \cref{fig:benchmarks circuits random} and detailed in \cref{alg:benchmark circuits random}. 
    
    \item[Pauli Gadget:] Pauli gadgets \cite{Cowtan_2020} are quantum circuits implementing the exponential of a Pauli tensor. Sequences of Pauli gadgets acting on qubits form \emph{product formula} circuits, most commonly used in Hamiltonian simulation and the variational quantum eigensolver (VQE)\cite{Berry_2006, Peruzzo_2014, barkoutsos2018quantum}. Circuits from this particular class of Pauli Gadget circuits are constructed from several layers of random Pauli Gadgets, each acting on a random subset of $n$ qubits \cite{Mills_2021}. Pauli Gadget circuits are exemplified in \cref{fig:benchmarks circuits pauli gadget} and detailed in \cref{alg:benchmark circuits pauli}.
\end{description}
In the case of all benchmarks conducted in this work, the number of layers used is set to be equal to the number of qubits in the circuit.

The comparative size of the circuits in these classes is seen in \cref{fig:benchmarks circuits size}. Note that \CZ{} fraction circuits contain many fewer two-qubit gates than circuits from the other two classes. This is because each layer of the Quantum Volume and Pauli Gadget circuits corresponds to many gates when decomposed into the \gateset{} gate set. Further, while circuits spanning the same number of qubits in the Quantum Volume class contain more two-qubit gates than those in the Pauli Gadget class, this number is comparable. 

\begin{figure}
	\begin{algorithm}[H]
		\caption{Building an instance of CZ Fraction.}
		\label{alg:benchmark circuits cz fraction}
		\inout{Width, $n \in \mathbb{Z}$, depth, $d \in \mathbb{Z}$, fraction $p \in \left[ 0 , 1 \right]$}{Circuit, $C_n$}
		\begin{algorithmic}[1]
			\For{each layer $t$ up to depth $d$}
                    \For{each qubit $q_i$}
			             \State With probability $1-p$ apply \H{}.
                    \EndFor
                    \State Randomly pair all qubits to which no \H{} was acted.
                    \State To each pair apply \CZ{}.
			\EndFor
		\end{algorithmic}
	\end{algorithm}
	\begin{algorithm}[H]
		\caption{Building an instance of Quantum Volume.}
		\label{alg:benchmark circuits random}
		\inout{Width, $n \in \mathbb{Z}$, depth, $d \in \mathbb{Z}$}{Circuit, $C_n$}
		\begin{algorithmic}[1]
			\For{each layer $t$ up to depth $d$} 
			\State Divide qubits into $\frac{n}{2}$ random pairs $\left\{q_{i,1}, q_{i,2}\right\}$.
			\ForAll{$i \in \mathbb{Z}$, $0 \leq i \leq \frac{n}{2}$}
			\State Generate $U_{i,t} \in \mathrm{SU} \brac{4}$ uniformly at random according to the Haar measure.
			\State Enact the gate corresponding to the unitary $U_{i,t}$ on qubits $q_{i, 1}$ and $q_{i,2}$. \Comment Decompositions of this gate can be found in \cite{tucci2005introduction}
			\EndFor
			\EndFor
		\end{algorithmic}
	\end{algorithm}
	\begin{algorithm}[H]
		\caption{Building an instance of Pauli Gadget.}
		\label{alg:benchmark circuits pauli}
		\inout{Width, $n \in \mathbb{Z}$, depth, $d \in \mathbb{Z}$}{Circuit, $C_n$}
		\begin{algorithmic}[1]
			\For{each layer $t$ up to depth $d$}
			\State Select a random string $s^t \in \left\{I,X,Y,Z \right\}^n$
			\State Generate random angle $\alpha^t \in \left[0 , 2 \pi \right]$
			\State Enact $\exp \brac{i\bigotimes_j s^t_j \alpha^t}$ on qubits $q_1 , ... , q_n$. \Comment Decompositions of this gate can be found in \cite{Cowtan_2020}
			\EndFor
		\end{algorithmic}
	\end{algorithm}
\end{figure}

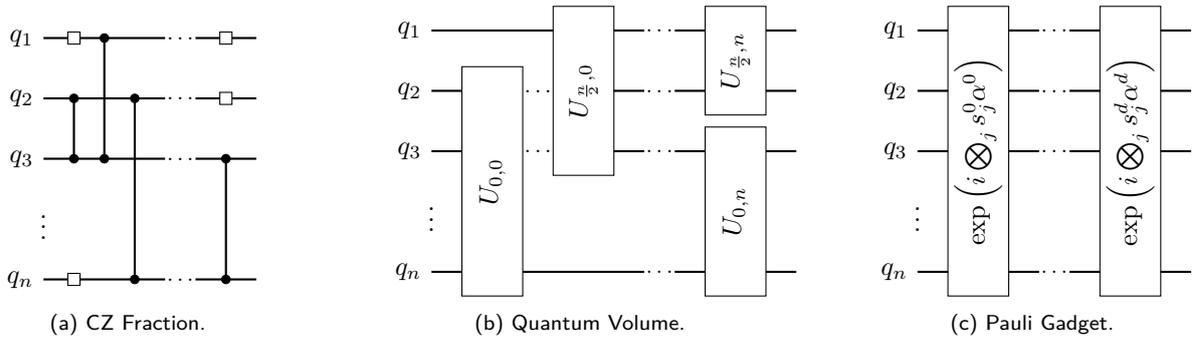
\begin{figure*}
\begin{subfigure}[t]{0.3\textwidth}
        \centering
        \begin{tikzpicture}[scale=0.8]
        \draw[thick] (-0.5,1) node[anchor=east] {$q_n$} -- (1.5,1);
        \node at (-0.5, 2) {\vdots};
		\draw[thick] (-0.5,3) node[anchor=east] {$q_3$} -- (1.5,3);
		\draw[thick] (-0.5,4) node[anchor=east] {$q_2$} -- (1.5,4);
		\draw[thick] (-0.5,5) node[anchor=east] {$q_1$} -- (1.5,5);

            \filldraw[black] (0, 4) circle (2pt);
            \draw[thick] (0, 4) -- (0, 3);
            \filldraw[black] (0, 3) circle (2pt);

            \filldraw[black] (0.5, 5) circle (2pt);
            \draw[thick] (0.5, 5) -- (0.5, 3);
            \filldraw[black] (0.5, 3) circle (2pt);

            \filldraw[black] (1, 4) circle (2pt);
            \draw[thick] (1, 4) -- (1, 1);
            \filldraw[black] (1, 1) circle (2pt);

            \draw[fill=white] (-0.1,0.9) rectangle (0.1,1.1);
            \draw[fill=white] (-0.1,4.9) rectangle (0.1,5.1);

        \path (1.5,5) -- node {\footnotesize $\,\dots$} (2,5);
		\path (1.5,4) -- node {\footnotesize $\,\dots$} (2,4);
		\path (1.5,3) -- node {\footnotesize $\,\dots$} (2,3);
		\path (1.5,1) -- node {\footnotesize $\,\dots$} (2,1);

        \draw[thick] (2,5) -- (3, 5);
		\draw[thick] (2,4) -- (3, 4);
		\draw[thick] (2,3) -- (3, 3);
		\draw[thick] (2,1) -- (3, 1);

            \filldraw[black] (2.5, 3) circle (2pt);
            \draw[thick] (2.5, 3) -- (2.5, 1);
            \filldraw[black] (2.5, 1) circle (2pt);

            \draw[fill=white] (2.4,3.9) rectangle (2.6,4.1);
            \draw[fill=white] (2.4,4.9) rectangle (2.6,5.1);
        \end{tikzpicture}
        \caption{CZ Fraction.}
        \label{fig:benchmarks circuits cz fraction}
    \end{subfigure}
    \hfill
    \begin{subfigure}[t]{0.3\textwidth}
        \centering
	\begin{tikzpicture}[scale=0.8]
        \draw[thick] (-0.5,1) node[anchor=east] {$q_n$} -- (0,1);
        \node at (-0.5, 2) {\vdots};
		\draw[thick] (-0.5,3) node[anchor=east] {$q_3$} -- (0,3);
		\draw[thick] (-0.5,4) node[anchor=east] {$q_2$} -- (0,4);
		\draw[thick] (-0.5,5) node[anchor=east] {$q_1$} -- (1.5,5);

		\draw (0,0.6) rectangle (1,4.4) node[pos=.5, rotate=90] {$U_{0,0}$};
		\draw (1.5,2.6) rectangle (2.5,5.4) node[pos=.5, rotate=90]
        {$U_{\frac{n}{2},0}$};
		
		\path (1,3) -- node {\footnotesize $\,\dots$} (1.5,3);
		\path (1,4) -- node {\footnotesize $\,\dots$} (1.5,4);

		\draw[thick] (2.5,5) -- (3,5);
		\draw[thick] (2.5,4) -- (3,4);
		\draw[thick] (2.5,3) -- (3,3);
		\draw[thick] (1,1) -- (3,1);

        \path (3,5) -- node {\footnotesize $\,\dots$} (3.5,5);
		\path (3,4) -- node {\footnotesize $\,\dots$} (3.5,4);
		\path (3,3) -- node {\footnotesize $\,\dots$} (3.5,3);
		\path (3,1) -- node {\footnotesize $\,\dots$} (3.5,1);

        \draw[thick] (3.5,5) -- (4, 5);
		\draw[thick] (3.5,4) -- (4, 4);
		\draw[thick] (3.5,3) -- (4, 3);
		\draw[thick] (3.5,1) -- (4, 1);
		
		\draw (4,0.6) rectangle (5,3.4) node[pos=.5, rotate=90]{$U_{0,n}$};
		\draw (4,3.6) rectangle (5,5.4) node[pos=.5, rotate=90]{$U_{\frac{n}{2},n}$};

        \draw[thick] (5,5) -- (5.5, 5);
		\draw[thick] (5,4) -- (5.5, 4);
		\draw[thick] (5,3) -- (5.5, 3);
		\draw[thick] (5,1) -- (5.5, 1);
		
	\end{tikzpicture}
    \caption{Quantum Volume.}
    \label{fig:benchmarks circuits  random}
    \end{subfigure}
    \hfill
    \begin{subfigure}[t]{0.3\textwidth}
        \centering
	\begin{tikzpicture}[scale =0.8]
		\draw[thick] (-0.5,1) node[anchor=east] {$q_n$} -- (0,1);
		\node at (-0.5, 2) {\vdots};
		\draw[thick] (-0.5,3) node[anchor=east] {$q_3$} -- (0,3);
		\draw[thick] (-0.5,4) node[anchor=east] {$q_2$} -- (0,4);
		\draw[thick] (-0.5,5) node[anchor=east] {$q_1$} -- (0,5);
		
		\draw (0,0.6) rectangle (1,5.4) node[pos=.5, rotate=90] {$\exp \brac{i\bigotimes_j s^0_j \alpha^0}$};
		
		\draw[thick] (1,1) -- (1.5,1);
		\draw[thick] (1,3) -- (1.5,3);
		\draw[thick] (1,4) -- (1.5,4);
		\draw[thick] (1,5) -- (1.5,5);
		
		\path (1.5,1) -- node {\footnotesize $\,\dots$} (2,1);
		\path (1.5,3) -- node {\footnotesize $\,\dots$} (2,3);
		\path (1.5,4) -- node {\footnotesize $\,\dots$} (2,4);
		\path (1.5,5) -- node {\footnotesize $\,\dots$} (2,5);
		
		\draw[thick] (2,1) -- (2.5,1);
		\draw[thick] (2,3) -- (2.5,3);
		\draw[thick] (2,4) -- (2.5,4);
		\draw[thick] (2,5) -- (2.5,5);
		
		\draw (2.5,0.6) rectangle (3.5,5.4) node[pos=.5, rotate=90] {$\exp \brac{i\bigotimes_j s^d_j \alpha^d}$};
		
		\draw[thick] (3.5,1) -- (4,1);
		\draw[thick] (3.5,3) -- (4,3);
		\draw[thick] (3.5,4) -- (4,4);
		\draw[thick] (3.5,5) -- (4,5);
		
	\end{tikzpicture}
        \caption{Pauli Gadget.}
        \label{fig:benchmarks circuits pauli gadget}
    \end{subfigure}
    \caption{\textbf{Examples of circuits used for benchmarks.}}
    \label{fig:benchmarks circuits}
\end{figure*}

\begin{figure}
    \centering
     \includegraphics[width=\columnwidth]{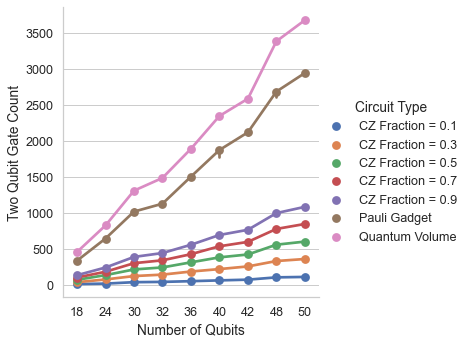}
     \caption{\textbf{Average number of two-qubit gates for circuit of each type.} Points indicate the mean number of two-qubit gates for circuits of the given type, covering the given number of qubits. Error bars indicate one standard deviation.}
     \label{fig:benchmarks circuits size}
\end{figure}

\subsection{Distribution workflows}
\label{sec:benchmarks workflows}




This section details the distribution workflows used in the experiments of \cref{sec:benchmarks results}. Our novel distribution workflows improve upon the distributions output by the following schemes, presented in the literature~\cite{Junyi2022, Andr_s_Mart_nez_2019} and available through \pytketdqc{}
\begin{description}
    \item[\ce{}:] Utilises the approach discussed in \cref{sec:vertex_cover_embedding} for distributing quantum circuits using vertex covering.
    \item[\hp{}:] Utilises the approach discussed in \cref{sec:hypergraph} for distributing quantum circuits using hypergraph partitioning.
\end{description}
The following workflows are novel to this work, and are available through \pytketdqc{}. The refinement passes referenced here are detailed further in \cref{sec:refinement}.
\begin{description}
    \item[\ces{}:] All gates in each hyperedge of distributions resulting from \ce{} act between the same two modules. \ces{} improves upon the output of \ce{} by merging packets where doing so does not require additional embedding, as discussed in \cref{sec:refinement steiner}. This results in an ebit saving from reusing proxy link qubits when distributing entanglement according Steiner trees.
    \item[\cesd{}:] Non-local gates are allocated by \ce{} to either one of the two modules that contain the qubits the gate acts on. Gates are not reallocated by \ces{}. \cesd{} improves upon the latter by reallocating gates, making use of detached gates to save extra ebits (details in \cref{sec:refinement detatched}). Note that this improvement upon \ce{} is made possible by first refining by merging hyperedges, as in \ces{}, as detached gates may be beneficially utilised when hyperedges contain gates acting between 3 or more modules.
    \item[\pe{}:] Refines the approach of \hp{} to make use of embedding (see \cref{sec:refinement embedding}). This does not consider heterogeneous network connectivity, and we will only use it on homogeneous networks.
    \item[\ph{}:] Recreates the approach of \cref{sec:boundary_realloc} to adapt the output of \hp{} to heterogeneous networks using boundary reallocation.
    \item[\phe{}:] Since \ph{} neglects the possibility of embedding gates, \phe{} improves upon it by making use of embedding to merge distributable packets, as discussed in \cref{sec:refinement embedding}.
    \item[\anneal{}:] Utilises the approach of \cref{sec:annealing} for quantum circuit distribution using simulated annealing. \anneal{} optimises for heterogeneous networks in the first instance, including detached gates and Steiner trees, but does not consider embedding.
\end{description}
The following workflows correspond to existing approaches \cite{gsundaram_et_al} discussed in \cref{sec:vertex_cover}. The necessary implementations are not available in \pytketdqc{} but were provided by their authors upon request. These approaches perform qubit allocation by solving a balanced k-min-cut problem over an edge-weighted graph, where the weights capture the connectivity of the circuit. A greedy algorithm that iteratively fixes allocations of non-local gates is used. Each iteration requires solving an instance of the weighted densest subgraph problem in order to pick the allocations to fix at that round. The following two distribution workflows use different methods of solving the weighted densest subgraph problem.
\begin{description}
    \item[\gss{}:] A simple greedy solution.
    \item[\gslp{}:] An optimal approach based on Integer Linear Programming.
\end{description}

\subsection{Results}
\label{sec:benchmarks results}

Here we present the results of the benchmarks described above. We explore homogeneous networks in \cref{sec:benchmarks results homogeneous}, heterogeneous networks in \cref{sec:benchmarks results heterogeneous}, and the distribution of a particular circuit of practical interest over heterogeneous networks in \cref{sec:benchmarks results chemistry}.

\subsubsection{Homogeneous networks}
\label{sec:benchmarks results homogeneous}

We compare the techniques described in \cref{sec:solutions} to the techniques of \cite{gsundaram_et_al}, namely \gss{} and \gslp{}. Aligning with the target scenario of \cite{gsundaram_et_al}, we consider homogeneous networks and CZ Fraction circuits. We consider networks with 4, 5 and 6 modules, each with 8 qubits per module, as well as 2 module networks with 16 and 25 qubits per module. For each network size we generate 5 random CZ fraction circuits of that size.

Results concerning networks with more than 2 modules can be seen in \cref{fig:benchmarks homogeneous cz fraction}. Consistently, the unrefined distribution workflows producing the lowest cost distributions are \hp{} and \gss{}.\footnote{Note that this contrasts with the results reported in \cite{gsundaram_et_al}. This is the result of correcting a poor choice of default parameters in \cite{Andr_s_Mart_nez_2019}, which limited how large a hyperedge could be.} For smaller networks \hp{} mildly outperforms \gss{}.

\begin{figure*}
    \centering
     \begin{subfigure}[b]{\textwidth}
         \centering
         \includegraphics[width=\textwidth]{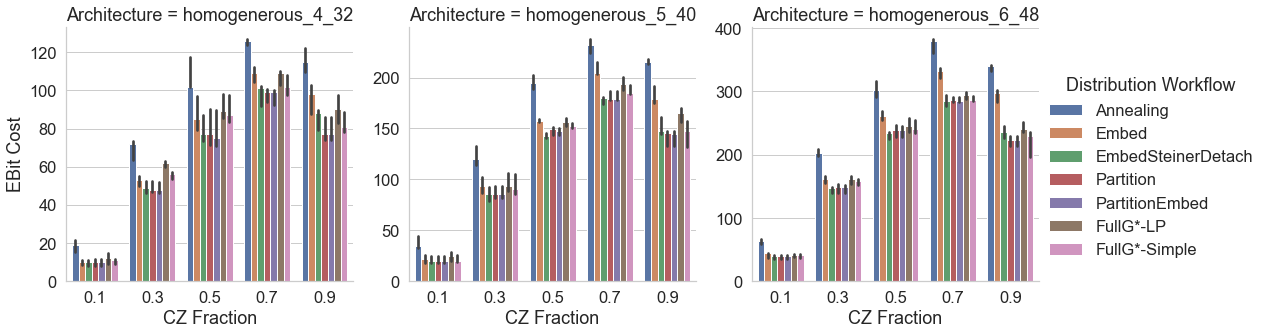}
         \caption{Ebit cost.}
         \label{fig:benchmarks homogeneous cz fraction cost}
     \end{subfigure}
     
     \vspace{10pt}
     
     \begin{subfigure}[b]{\textwidth}
         \centering
         \includegraphics[width=\textwidth]{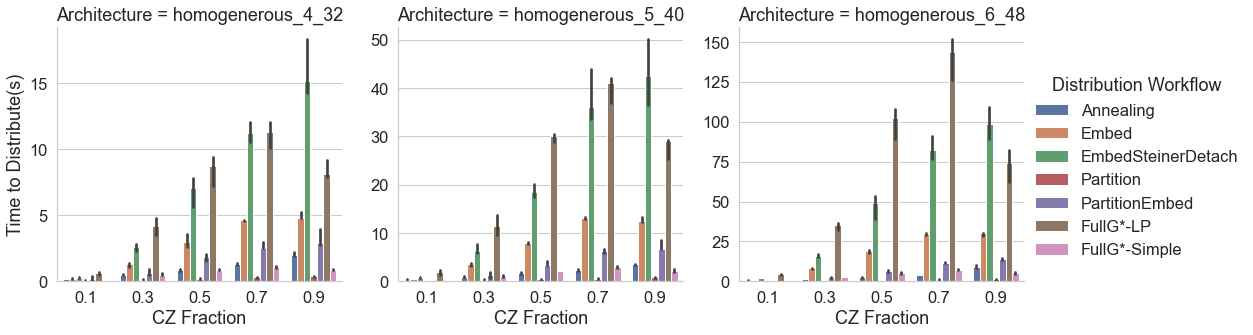}
         \caption{Time to generate distribution, measured in seconds.}
         \label{fig:benchmarks homogeneous cz fraction time}
    \end{subfigure}
    \caption{\textbf{Distribution techniques applied to homogeneous networks and CZ fraction circuits.} Here we use the notation where \texttt{homogeneous\_n\_m} is a homogeneous network connecting \texttt{n} modules in a network with a total of \texttt{m} qubits. Bars indicate the median over 5 circuits. Error bars indicate 75\% percentile range.}
    \label{fig:benchmarks homogeneous cz fraction}
\end{figure*}

\anneal{} performs the worst overall, which is to be expected as the methods used are particularly general. However, \anneal{} is particularly sensitive to the values of hyper-parameters, particularly the number of annealing iterations performed. Hence, these results may be improved by increasing the number of iterations. Here, the number of iterations is chosen so that the time taken by \anneal{} is roughly comparable to those of the best performing unrefined distribution workflows, as seen in \cref{fig:benchmarks homogeneous cz fraction time}. \hp{} performs the quickest across circuit sizes and \CZ{} fractions, while the scaling of \gslp{} and \cesd{} is the worst. However, as no workflow takes more than a few minutes to complete, the time taken is acceptable in all cases.

\ce{} performs poorly in the results of \cref{fig:benchmarks homogeneous cz fraction cost}. This is unsurprising as it corresponds to the original work from~\cite{Junyi2022} which was designed to work best with 2 modules, where detached gates need not be considered. However \cesd{} significantly improves upon \ce{}, demonstrating the significant potential gains to be made from the use to detached gates. Indeed, in the case of 2 modules, as seen in \cref{fig:benchmarks homogeneous cz fraction 2 modules}, \ce{} performs the best (particularly in the regime of 50 qubits and \CZ{} fraction of 0.5 and 0.7). In this case \cesd{} does not improve the results, as is to be expected since in the 2 module case there is no opportunity for detached gates.

\begin{figure*}
    \centering
    \begin{subfigure}[b]{\textwidth}
         \centering
         \includegraphics[width=0.7\textwidth]{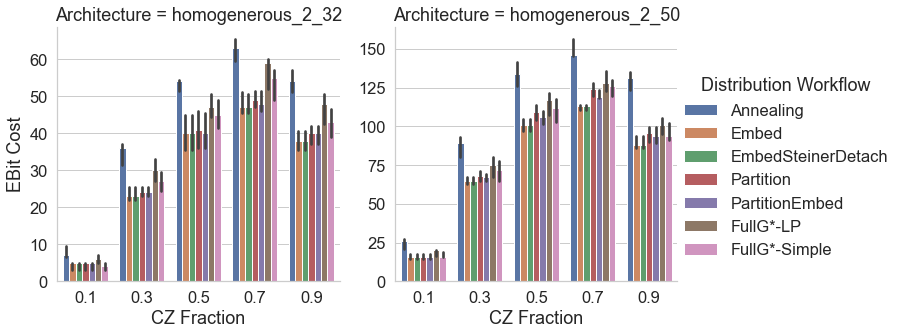}
         \caption{Ebit cost.}
         \label{fig:benchmarks homogeneous cz fraction cost 2 modules}
     \end{subfigure}
     
     \vspace{10pt}
     
     \begin{subfigure}[b]{\textwidth}
         \centering
         \includegraphics[width=0.7\textwidth]{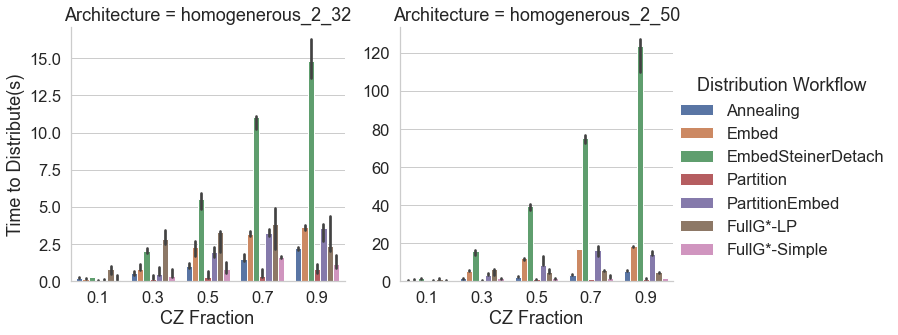}
         \caption{Time to generate distribution, measured in seconds.}
         \label{fig:benchmarks homogeneous cz fraction time 2 modules}
    \end{subfigure}
    \caption{\textbf{Distribution techniques applied to homogeneous networks and CZ fraction circuits over 2 modules.} Here we use the notation that \texttt{homogeneous\_n\_m} is a homogeneous network connecting \texttt{n} nodes in a network with a total of \texttt{m} qubits. Bars indicate the median over 5 circuits. Error bars indicate 75\% percentile range.}
    \label{fig:benchmarks homogeneous cz fraction 2 modules}
\end{figure*}

In the case of networks containing more than 2 modules, \pe{} barely improves upon \hp{}. This may be because \hp{} produces many detached gates which cannot be embedded by the embedding refinement pass. In the case of 2 server networks, where no gates are detached, \pe{} mildly improves upon \hp{}, but does not outperform \ce{}. This demonstrates that embedding can be beneficial when sequences of gates act between 2 modules, but implies that embedding should be considered in the first instance on such networks, rather than through refinement.

We consider the performance of these techniques on the Quantum Volume and Pauli Gadget circuit classes, giving the results in \cref{fig:benchmarks homogeneous random}. Here we consider only network with greater than 2 modules, and so do not consider \ce{} which performs well only on 2 module networks. As these circuits have a significantly larger number of gates than the \CZ{} Fraction circuits we consider only the quicker distribution workflows, namely \gss{}, \hp{}, \pe{}, and \anneal{}.

\begin{figure*}
    \centering
    \begin{subfigure}[b]{\textwidth}
         \centering
         \includegraphics[width=0.7\textwidth]{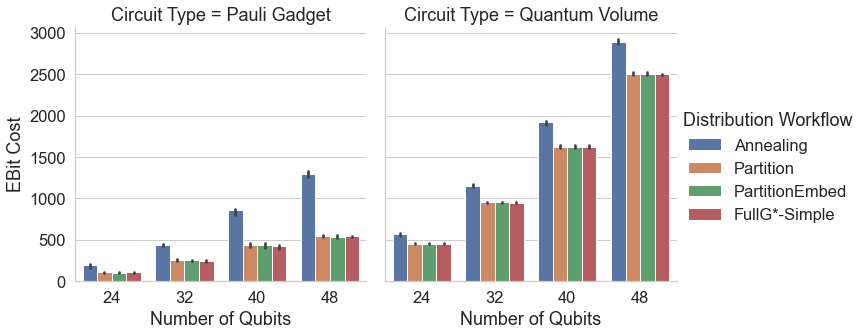}
         \caption{Ebit cost.}
         \label{fig:benchmarks homogeneous random cost}
     \end{subfigure}
     
     \vspace{5pt}
     
     \begin{subfigure}[b]{\textwidth}
         \centering
         \includegraphics[width=0.7\textwidth]{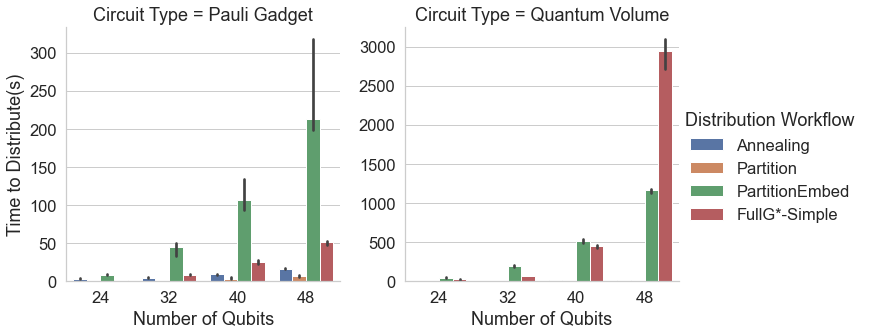}
         \caption{Time to generate distribution, measured in seconds.}
         \label{fig:benchmarks homogeneous random time}
    \end{subfigure}
    \caption{\textbf{Distribution techniques applied to homogeneous networks and Quantum Volume and Pauli Gadget circuits.} Here we use homogeneous networks built of 3, 4, 5 and 6 modules, each with 8 qubits. Each sample in the experiment corresponds to a single circuit, with 5 samples per bar. Bars indicate the median over five circuits. Error bars indicate 75\% percentile range.}
    \label{fig:benchmarks homogeneous random}
\end{figure*}

Note that the cost of distributing Pauli Gadget circuits is cheaper for a similar total number of two-qubit gates than the cost of distributing Quantum Volume circuits. Refer to \cref{fig:benchmarks circuits size} for details on comparative 2-qubit gate counts. This is to be expected since the structure of Pauli Gadget circuits, having long sequences of \CZ{} gates, allows for the construction of larger distributable packets. For the same reason, Pauli Gadget circuits may be distributed more quickly.

In \cref{fig:benchmarks homogeneous random} we see a similar pattern to the relative performance of the schemes as we saw in \cref{fig:benchmarks homogeneous cz fraction}, namely that there is no significant difference in the e-bit costs of the distributions produced by each workflow, apart from that \anneal{} has a higher cost. \hp{} performs best if both the ebit cost and time taken are considered. \gss{} performs similarly well as measured by ebit cost, but the time required to distribute with \gss{} scales worse as the number of distributable packets becomes very large, as is the case for the larger Quantum Volume circuits.

\subsubsection{Heterogeneous networks}
\label{sec:benchmarks results heterogeneous}

Here we compare the performance of \ce{}, \ces{}, \cesd{} \hp{}, \ph{}, \pe{}, and \phe{}, each of which is capable of performing circuit distribution over heterogeneous networks (although \ce{} and \hp{} are not designed for them). We do not include results for \anneal{} in the plots of this section, as in each case it is outperformed by \ph{}. We use networks with 3, 4, and 5 modules, each with an average of 6 computational qubits per module. Here we do not bound the size of the link qubit register, instead exploring these bounds in \cref{sec:limited links}. For each network size we generate 5 random instances of each of the heterogeneous networks described in \cref{sec:benchmarks networks}. The results of these benchmarks can be found in \cref{fig:benchmarks heterogenous random,fig:benchmarks heterogeneous cz fraction}, and our findings are detailed below.

\begin{figure*}
     \centering
     \begin{subfigure}[b]{\textwidth}
         \centering
         \includegraphics[width=\textwidth]{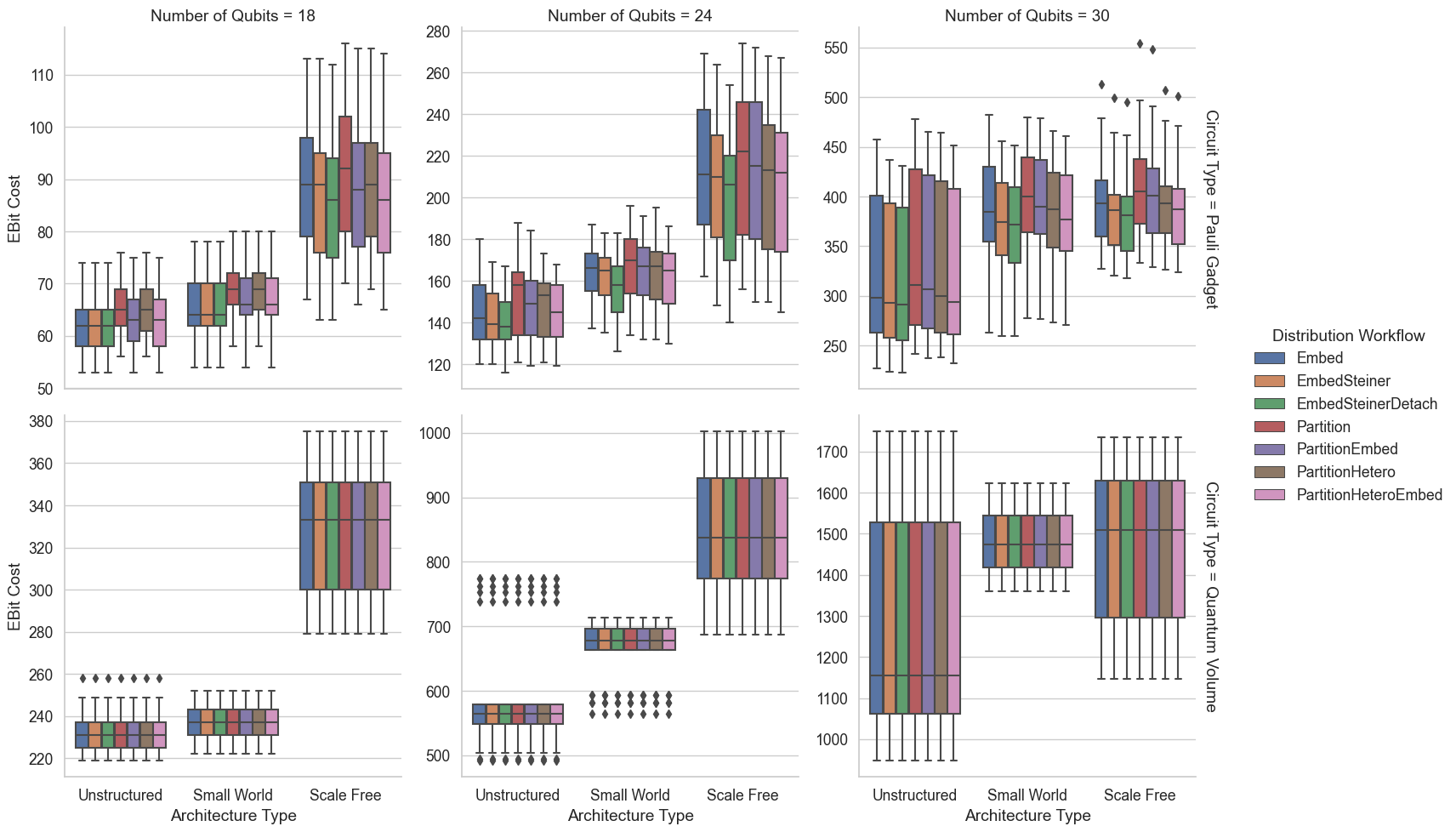}
         \caption{Ebit cost. Boxes give median and interquartile range. Whiskers extend to the last data point within 2.5 times the interquartile range from the median.}
        \label{fig:benchmarks heterogenous random cost}
     \end{subfigure}

     \vspace{5pt}
     
     \begin{subfigure}[b]{\textwidth}
         \centering
         \includegraphics[width=\textwidth]{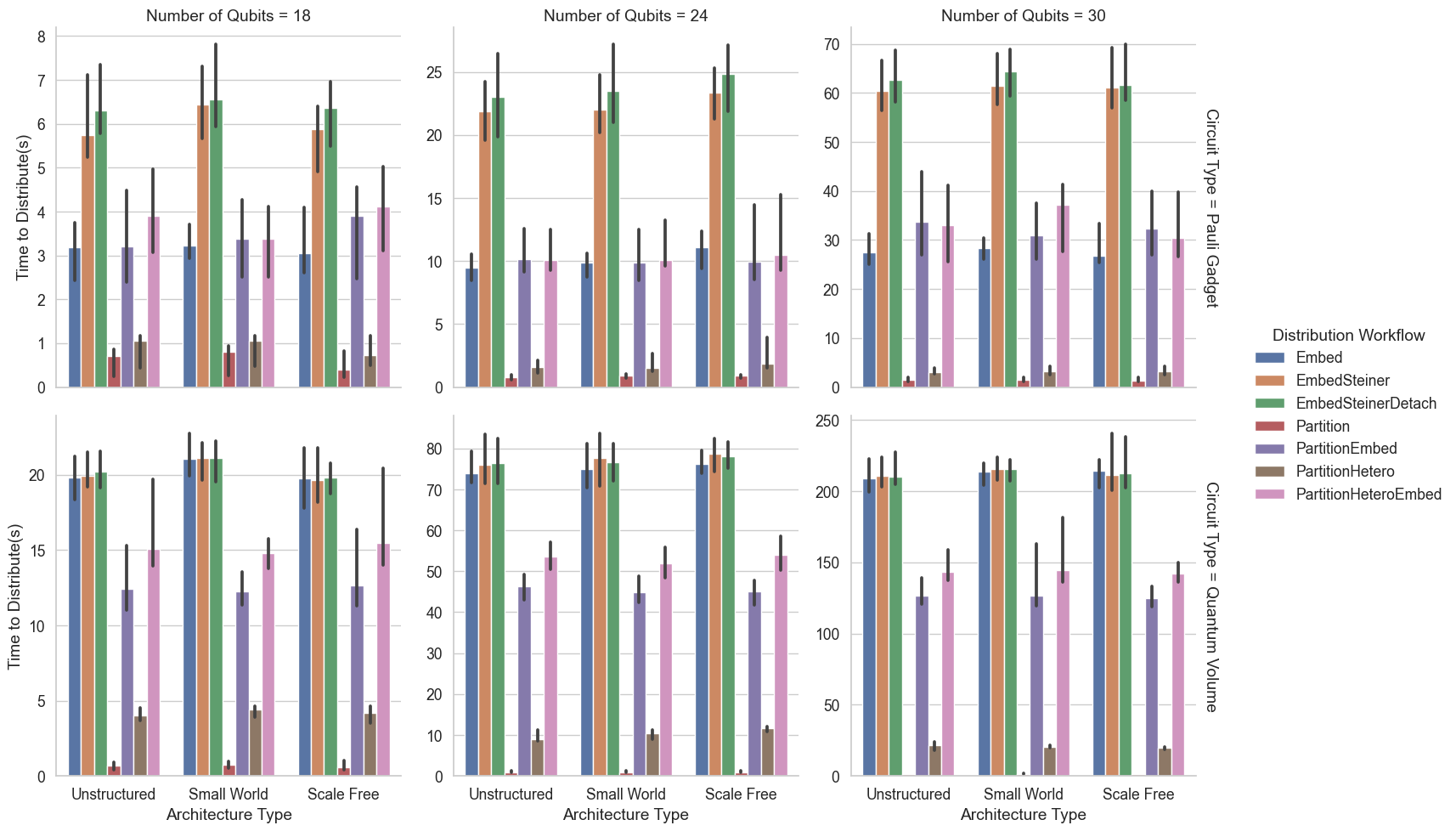}
         \caption{Time to generate distribution, measured in seconds. Bars indicate the median over 5 circuits. Error bars indicate 75\% percentile range.}
         \label{fig:benchmarks heterogenous random time}
     \end{subfigure}
     \caption{\textbf{Distribution over heterogeneous networks.} Here we use heterogeneous networks built of 3, 4, and 5 modules, each with an average of 6 qubits. Each sample in the experiment corresponds to a single circuit-network pair. Each bar/box considers 5 circuits and 5 networks, giving a total of 25 circuit-network pairs per bar/box.}
    \label{fig:benchmarks heterogenous random}
\end{figure*}

\begin{figure*}
    \centering
     \begin{subfigure}[b]{\textwidth}
         \centering
         \includegraphics[width=\textwidth]{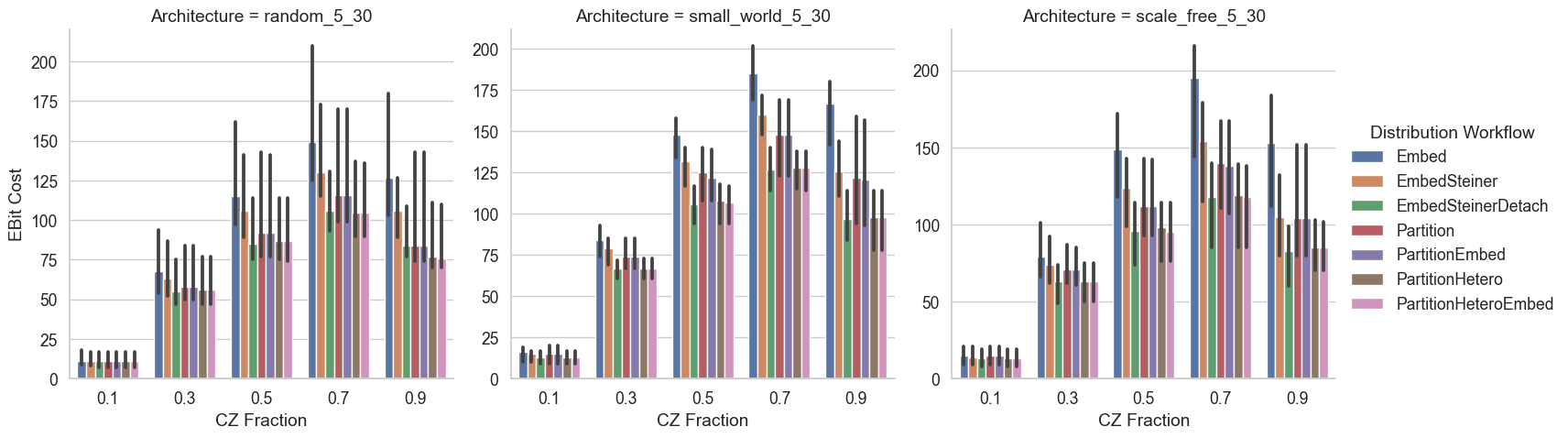}
         \caption{Ebit cost.}
         \label{fig:benchmarks heterogeneous cz fraction cost}
     \end{subfigure}
     
     \vspace{10pt}
     
     \begin{subfigure}[b]{\textwidth}
         \centering
         \includegraphics[width=\textwidth]{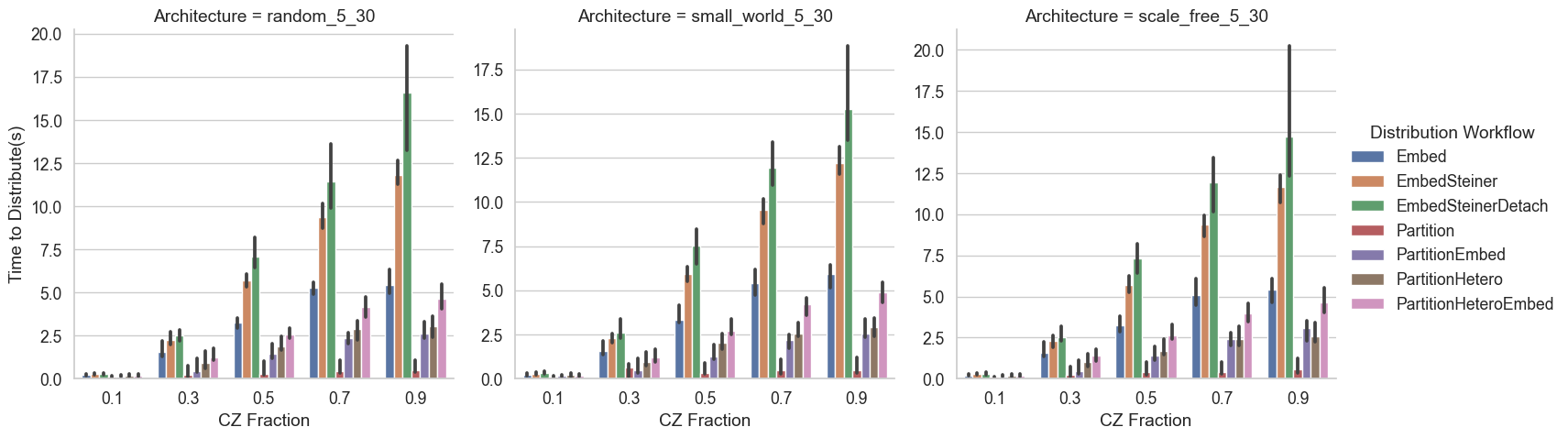}
         \caption{Time to generate distribution, measured in seconds.}
         \label{fig:benchmarks heterogeneous cz fraction time}
    \end{subfigure}
    \caption{\textbf{Distribution techniques applied to heterogeneous networks and CZ fraction circuits.} Here we use the notation where \texttt{type\_n\_m} is a network of type \texttt{type} connecting \texttt{n} modules in a network with a total of \texttt{m} qubits. Bars indicate the median over 5 circuits. Error bars indicate 75\% percentile range.}
    \label{fig:benchmarks heterogeneous cz fraction}
\end{figure*}

\gss{} and \gslp{} are not suited to heterogeneous networks, and so the most relevant comparable result is that of \cite{Sundaram2022GeneralDistribution}. Unfortunately, we were not able to access the implementation of the work of \cite{Sundaram2022GeneralDistribution} for comparison. The latter work does not make use of Steiner trees for entanglement distribution, which are utilised by all the the schemes presented in this section except \ce{}. The work of \cite{Sundaram2022GeneralDistribution} does not consider embedding either, which is considered by \ce{}, \ces{}, \cesd{}, \pe{}, and \phe{}, and is shown to provide a reduction in ebit cost. As such we expect our techniques to compare favourably to those of \cite{Sundaram2022GeneralDistribution}.

\paragraph{Refinement has little effect on Quantum Volume circuits.} We expect that distributable packets are unavoidably small in the case of Quantum Volume circuits since there are few consecutive \CU{} gates in the circuits and few valid embedding units: the phases of \RZ{} gates will rarely satisfy condition (d) from Lemma~\ref{lem:embedding_cond}. In \cref{fig:benchmarks heterogenous random cost} this manifests in there being no gain from using refinement passes targeted at the use of Steiner trees and embedding. 

Additionally, no benefit is found in these circuits when performing boundary reallocation targeted at optimising for the network topology (\ph{}) and detached gates (\cesd{}). This again reflects that the hyperedges are too small (often just edges from gate-vertex to qubit-vertex) which, combined with the uniformly random connectivity of the circuit, leads to no window for improvement of the vertex allocation.

\paragraph{Each refinement improves the median cost of Pauli Gadget circuits.} As opposed to Quantum Volume circuits, distributable packets in Pauli Gadget circuits are relatively large, and can be beneficially combined. This is shown in the improvement achieved in \cref{fig:benchmarks heterogenous random cost,fig:benchmarks heterogeneous cz fraction cost} by employing refinement passes making use of Steiner trees, detached gates and embedding. 

\paragraph{Pauli Gadget circuits are cheaper and quicker to distribute.} \cref{fig:benchmarks heterogenous random cost} demonstrates that, as a result of Pauli Gadget circuits having larger distributable packets, the cost of distribution of Pauli Gadget circuits is much less than that of Quantum Volume circuits of similar size. Likewise, as seen in \cref{fig:benchmarks heterogenous random time}, the time required to distribute Pauli gadget circuits is shorter since run time scales primarily with respect to the number of packets, rather than the number of qubits or gates in the circuit.

\paragraph{CZ Fraction circuits on networks with more than 2 modules do not benefit greatly form embedding.} As observed initially in \cref{fig:benchmarks homogeneous cz fraction cost}, we see again in \cref{fig:benchmarks heterogeneous cz fraction cost} that refinement to make use of embedding has little impact on the resulting cost of distributing CZ fraction circuits onto networks with more than 2 modules. This identifies a middle ground between the more structured Pauli Gadget circuits, which do benefit from embedding, and the larger gate set of the Quantum Volume circuits, which do not benefit from refinement of any kind.

\paragraph{Techniques combined perform best} We see that \cesd{} typically perform as well or better than the other workflows. This demonstrates the benefit of combining the use of detached gates, Steiner trees, and embedding, and that no one or two alone would perform best. That \cesd{} mildly outperforms \phe{} on average --- which also makes use of detached gates, Steiner trees and embedding ---  in the Pauli Gadget results of \cref{fig:benchmarks heterogenous random cost} indicates that embedding is hard to capture in a refinement pass, so it should instead be optimised for in the first instance. 

\subsubsection{Chemically-Aware Ansatz}
\label{sec:benchmarks results chemistry}

We explore the performance of our approaches in the particular case of a chemically-aware unitary coupled cluster singles and doubles ansatz \cite{kham2022}. We use the example of the minimal basis $\text{H}_2 \text{O}$ molecule with $\text{C}_{2v}$ point group symmetry and the 6 electrons in 5 spatial orbital (6e, 5o) active space. The corresponding circuit contains 10 qubits, and is built from Pauli gadgets selected to reflect the symmetries of the system. In the gateset \gateset{} the circuit contains 463 2-qubit gates. 

We distribute this circuit onto the networks of 11 qubits depicted in \cref{fig:benchmarks chimistry aware}, without bounds on the link qubit register sizes. The results are listed in \cref{tab:chem aware}. In the results of \cref{sec:benchmarks results heterogeneous}, \cref{sec:benchmarks results homogeneous} and \cref{sec:limited links} the number of qubits in the circuit matches the total number of computation qubits in the network. However our tools are capable of managing situations where there are more computational qubits in the network than are required by the circuit, as demonstrated here.

As expected and indicated by the results of \cref{sec:benchmarks results heterogeneous}, we see that the ebit cost decreases with additional refinement. Here it is noticeable that embedding is beneficial, both when introduced as part of a refinement pass, and when introduced during an initial circuit distribution. This shows that real application have circuit structures which benefit from embedding. Indeed it is the case that \cesd{}, which introduces embedding in the first instance, performs best.




\begin{figure}
    \centering
    \begin{subfigure}[b]{0.45\columnwidth}
        \begin{tikzpicture}[thick, scale=1.5]
            \coordinate (a) at (0.587,-0.809);
            \coordinate (b) at (0.951,0.309);
            \coordinate (c) at (0,1);
            \coordinate (d) at (-0.951,0.309);
            \coordinate (e) at (-0.587,-0.809);
            
            \draw (a) -- (d);

            \draw (b) -- (c);

            \draw (c) -- (d);

            \draw (d) -- (e);

            \filldraw[black, fill=white] (a) circle (4pt) node {3};
            \filldraw[black, fill=white] (b) circle (4pt) node {2};
            \filldraw[black, fill=white] (c) circle (4pt) node {2};
            \filldraw[black, fill=white] (d) circle (4pt) node {2};
            \filldraw[black, fill=white] (e) circle (4pt) node {2};
        \end{tikzpicture}
        \caption{Network 1}
    \end{subfigure}
    \hfill
    \begin{subfigure}[b]{0.45\columnwidth}
        \begin{tikzpicture}[thick, scale=1.5]
            \coordinate (a) at (0.587,-0.809);
            \coordinate (b) at (0.951,0.309);
            \coordinate (c) at (0,1);
            \coordinate (d) at (-0.951,0.309);
            \coordinate (e) at (-0.587,-0.809);
            
            \draw (a) -- (b);
            \draw (a) -- (d);
            \draw (a) -- (e);


            \draw (c) -- (d);

            \draw (d) -- (e);

            \filldraw[black, fill=white] (a) circle (4pt) node {3};
            \filldraw[black, fill=white] (b) circle (4pt) node {2};
            \filldraw[black, fill=white] (c) circle (4pt) node {2};
            \filldraw[black, fill=white] (d) circle (4pt) node {2};
            \filldraw[black, fill=white] (e) circle (4pt) node {2};
        \end{tikzpicture}
        \caption{Network 2}
    \end{subfigure}
    \caption{\textbf{Networks for chemistry aware experiments.} Numbers in vertices indicate the number of qubits in each module; edges indicate connections alone which ebits can be established.}
    \label{fig:benchmarks chimistry aware}
\end{figure}
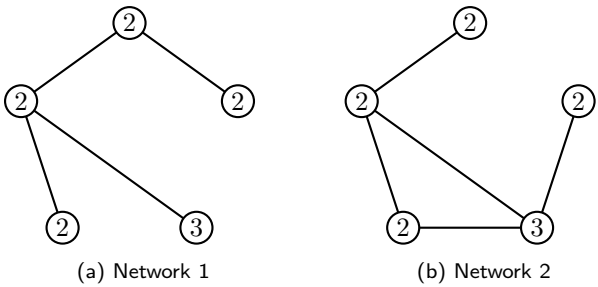


\begin{table*}
    \begin{tabularx}{\hsize}{bssssssss}
        \toprule
        Workflow & \multicolumn{2}{m}{Ebits} & \multicolumn{2}{m}{Detached}  & \multicolumn{2}{m}{Non-Local} & \multicolumn{2}{m}{Hyperedges} \\
         & 1 & 2 & 1 & 2 & 1 & 2 & 1 & 2 \\
        \midrule
        \ce{} & 238 & 172 & 0 & 0 & 343 & 340 & 631 & 632 \\
        \ces{} & 233 & 164 & 0 & 0 & 343 & 340 & 355 & 355 \\
        \cesd{} & 230 & 160 & 13 & 19 & 344 & 342 & 355 & 355 \\
        \midrule
        \hp{} & 295 & 196 & 28 & 28 & 342 & 342 & 397 & 397 \\
        \pe{} & 276 & 183 & 28 & 28 & 342 & 342 & 352 & 352 \\
        \ph{} & 256 & 189 & 35 & 27 & 350 & 344 & 397 & 397 \\
        \phe{} & 253 & 182 & 35 & 27 & 350 & 344 & 353 & 357 \\
        \bottomrule
    \end{tabularx}
    \caption{\textbf{Distributing chemistry aware ansatz circuits.} Networks 1 and 2 are found in \cref{fig:benchmarks chimistry aware}.}
    \label{tab:chem aware}
\end{table*}

\section{Conclusion and future work}
\label{sec:conclusion}

In this work we consider the distribution of quantum circuits over heterogeneous networks. We propose a collection of methods for distributing a given quantum circuit over an arbitrary network in a way  which minimises the number of ebits required. We make these methods available through \pytketdqc{}. 

Our first contribution is to introduce two workflows, \anneal{} and \hp{}, which perform quantum circuit distribution over heterogeneous networks in a way which makes use of detached gates. Secondly, where previous work had made use of detached gates or embedding, we present approaches to combining both. We do so by starting from distribution workflows this make use of either and applying rounds of refinement to make the most use of the other. Finally by proposing and incorporating entanglement distribution via Steiner trees, and by developing methods to combine their use with embedding, we further improve our solutions.

We extensively benchmark our distribution workflows on a selection of random and application motivated circuits. We identify that the best workflow to utilise on bipartite networks is \ce{}, while for larger homogeneous networks \hp{} is best. For structured application motivated circuits on heterogeneous networks \cesd{} is best, while for unstructured Quantum Volume circuits it is best to simply use the fastest workflow, in this case \hp{}.

In the future, optimisation strategies that can take into account the bound to the link qubit registers should be explored further. We are aware of two papers that do so, namely~\cite{Sundaram2022GeneralDistribution, Junyi2022}; however, the approach from~\cite{Sundaram2022GeneralDistribution} does not consider the embedding technique nor Steiner trees, while~\cite{Junyi2022} targets networks with only two modules. Moreover, even though the approach from \cite{Sundaram2022GeneralDistribution} tends to yield solutions that meet the specified bound to the link qubit register, this is not guaranteed --- in certain cases, it is necessary to split some of the distributable packets in a similar way we discuss in \cref{sec:limited links}. 

Future work may also consider preprocessing of the circuit to facilitate less costly distributions. This is particularly applicable to Pauli Gadgets, which may be decomposed in a variety of ways \cite{Cowtan_2020}, each of which may be more or less suited to distribution. Finally, we encourage the investigation of dynamical quantum circuit distribution, which combines gate teleportation and qubit teleportation. In \cite{Sundaram2022GeneralDistribution, Andr_s_Mart_nez_2019} the authors propose approaches to doing so, suggesting the static distribution of segments of circuits, stitched together via qubit teleportation. The work of this paper can be straightforwardly used as a static distributor in this framework, obtaining similar gains as those reported in~\cite{Sundaram2022GeneralDistribution, Andr_s_Mart_nez_2019}. We expect that approaches capable of freely interleaving qubit teleportation and EJPP processes be even more beneficial, and we suggest this be the most pressing line of further work.

\paragraph{Code Availability}

The techniques outlined in \cref{sec:solutions} are implemented in \pytketdqc{}, which can be found in \repo{} along with example notebooks. Documentation for \pytketdqc{} can found at \url{https://cqcl.github.io/pytket-dqc/}. The results of the benchmarks in \cref{sec:benchmarks} can be found in \url{https://github.com/CQCL/pytket-dqc_experiment_data}.

\paragraph{Benchmark Tools}

The results in \cref{sec:benchmarks results} were obtained using a MacBook Pro with a 2.3 GHz Dual-Core Intel Core i5 processor and 8 GB 2133 MHz LPDDR3 memory. Time to generate distribution in each plot refers to the time taken by this machine.

\paragraph{Acknowledgements}

The authors thank Ranjani Sundaram,  Himanshu Gupta,  and C.R.Ramakrishnan for their insights on their own work, and for sharing the related code. We also acknowledge the contributions made by Kosuke Matsui and Akihito Soeda during early conversations about the project. Thanks to Matty Hoban and Yao Tang for their careful proofreading of this manuscript.

JYW is supported by Ministry of Science and Technology, Taiwan, R.O.C. under Grant no. NSTC 110-2112-M-032-005-MY3, 111-2923-M-032-002-MY5 and 111-2119-M-008-002. TF and MM are supported by: MEXT Quantum Leap Flag-ship Program (MEXT QLEAP) JPMXS0118069605, JPMXS0120351339, Japan Society for the Promotion of Science 21H03394. 

\printbibliography

\appendix
\addtocontents{toc}{\protect\setcounter{tocdepth}{0}}

\section{Refinement}
\label{sec:refinement}

Our software, \pytketdqc{}, contains several \texttt{Refiners} which act on an already valid distribution of a circuit and further reduce its ebit cost. For instance, the approach described in \cref{sec:boundary_realloc} is implemented as a refiner. Refiners that have not been explained in the main text are briefly described here; some of these are used in our default workflows listed in \cref{sec:benchmarks workflows}. Note that sequencing and repeating these refiners may result in greater improvement than a single application, and there exist functionality in \pytketdqc{} for constructing such sequences.

\subsection{Detached gate identification}
\label{sec:refinement detatched}

The approach described in \cref{sec:boundary_realloc} refines a hypergraph's partition taking into account the heterogeneous network. It can be repurposed for the task of identifying opportunities where non-local gates may be implemented in a detached manner, \ie{} as in \cref{fig:detached_gate}. For this purpose, we impose that qubit-vertices of the hypergraph cannot have their allocation changed, so that the set of non-local gates remains the same. Furthermore, we impose that gate-vertices corresponding to embedded gates are not reallocated either, since Algorithm~\ref{alg:ALAP} is not capable of accurately estimating the cost of embedding a detached gate. As such, we can apply this approach as a refiner at the end of any workflow, reallocating gate-vertices that do not have a risk of detrimentally interfering with previous optimisations. 

This refiner is meant to be applied at the end of any workflow that employs the vertex covering approach discussed in \cref{sec:vertex_cover_embedding}. On its own, the vertex covering approach cannot take advantage of distribution via detached gates \cref{fig:detached_gate}, but such opportunities can be easily identified by hypergraph partitioning approaches, since it is just a matter of allocating the corresponding gate-vertex to a module other than where its qubits are assigned to. Since the approach from \cref{sec:boundary_realloc} only changes the allocation of a vertex when doing so reduces the ebit cost of the distribution (calculated using the approach from \cref{sec:ALAP}) and non-local gates that may be implemented in a detached manner are necessarily in the boundary of the partition, the refiner will be able to identify opportunities for these and reduce the cost of the distribution accordingly.

\subsection{Eager H-type merging}
\label{sec:refinement embedding}
Eager H-type merging refers to the merging of distributable packets via embedding. The refiner scans the circuit qubit by qubit, packet by packet, from start to end, finding opportunities where embedding can be used to merge distributable packets of the given solution.
For a given packet $P_0$ we first identify the next packet $P_1$ that can be merged via embedding (if any). We check whether an embedding conflict would be created by said merging and whether the embedding unit includes any detached gates.
If neither, the refiner merges $P_0$ and $P_1$ and, otherwise the packets are not merged. Regardless of the outcome, the refiner continues the search until all pairs of packets have been considered.

This refiner allows for the use of the embedding technique after workflows that do not use/optimise for it. Since it does not alter the way each of the non-local gates are distributed --- it only extends the lifespan of link qubits --- it can easily be applied at the end of any workflow. This comes at the disadvantage of not exploiting the potential of the embedding technique to the fullest. If embedding is expected to be the main source of ebit cost reduction on a given distribution, a workflow such as \ce{} or any of its derivatives would be preferable.

\subsection{D-type merging}
\label{sec:refinement steiner}

D-type merging refers to the merging of hyperedges when doing so does not require additional embedding. In particular two hyperedges on the same qubit can be D-type merged when two \CU{} gates, one from each packet, act consecutively on said qubit with no \H{} gates acting between them.

D-type merging has the effect of merging multiple distributable packets --- that may share their qubits with different modules --- into a single hyperedge. This has the advantage of allowing for greater opportunity to reduce ebit cost through the use of gate distribution via Steiner trees, as discussed in \cref{sec:Steiner}. As such, we recommend the use of D-type merging on workflows employing the vertex covering approach of \cref{sec:vertex_cover} which, on its own, would produce hyperedges involving only two modules each, preventing the use of optimisations based on Steiner trees.

There are two D-type merging refiners in \pytketdqc{}: \texttt{NeighbouringDTypeMerge} and \texttt{IntertwinedDTypeMerge}, which differ only in the relative positioning of the distributable packets which they merge. Each refiner iterates through the packets acting on a qubit, merging them when a D-type merge is possible.

\section{Building the distributed circuit}
\label{sec:to_pytket_circuit}

The outcome of each of the approaches discussed in this paper is a \texttt{Distribution} which, as established in Definition~\ref{def:distribution}, corresponds to a hypergraph along with an allocation of its vertices to modules. However, we ultimately want to convert this abstract data structure to an actual quantum circuit; a method to do so is provided within \pytketdqc{}.
Algorithm~\ref{alg:ALAP} described how, given a hyperedge and its allocation of vertices to modules, we can distribute its corresponding subcircuit, implementing the non-local gates corresponding to its gate-vertices via EJPP protocols and embedding the rest as appropriate. 
In order to distribute the whole circuit, we apply Algorithm~\ref{alg:ALAP} on the input circuit once per hyperedge in the \texttt{Distribution}. There are some subtleties that were omitted in the main text for the sake of brevity and are detailed below.

\paragraph{Correction gates.} These are extra gates that must be applied along with embedding units in order to preserve circuit equality. Recall that the input circuit has been rebased to the \gateset{} gateset and that, if the embedding unit commutes with the starting process, no correction gates are required (this follows immediately from Definition~\ref{def:embedding_unit}). Thus, correction gates are only required by embedding units that begin and end with an \H{} gate. \cref{fig:embedding_example} provides the two most simple examples where correction gates appear; from these, the general case can be inferred. 

Due to conditions (c) and (d) of Lemma~\ref{lem:embedding_cond} we only need to concern ourselves with correcting \H{}, \Z{} and \CZ{} gates. In particular, whenever an \H{} (or \Z{}) gate acting on $\hat{q}$ is being embedded, we must apply an \H{} (respectively, \Z{}) gate on each link qubit that is currently entangled with $\hat{q}$. In the case of a \CZ{} gate, let $\hat{q}$ be the qubit that is being shared and let $q'$ be the other qubit the \CZ{} gate acts on; we require one correction \CZ{} gate per link qubit currently entangled with $\hat{q}$, with the gate acting on such a link qubit and $q'$. As shown in \cref{fig:embedding_example}, the correction \CZ{} gates are local; this is guaranteed by condition (b) of Lemma~\ref{lem:embedding_cond}.
Repeating this process above for every gate within an embedding unit provides all of the correction gates that are required by Algorithm~\ref{alg:ALAP}.

\paragraph{Ending processes.} When using Algorithm~\ref{alg:ALAP}, a starting process may first entangle a proxy link qubit $q$ with another link qubit $q'$, and then have the ending process disentangling $q$ appear before the disentanglement of $q'$. In such a situation, $q'$ has lost its immediate predecessor in the entanglement chain, and it may be unclear how to disentangle it. Fortunately, the only gate to be applied on $q$ during the ending process of $q'$ is a classically controlled $Z$ gate (see \cref{fig:ejpp}) which may be equivalently applied on the root of the Steiner tree: the circuit qubit $\hat{q}$. Since ending processes only use LOCC, the network architecture does not pose an obstacle to their implementation and, hence, there is no dependency between ending processes. 

\paragraph{Intertwined embeddings.} In situations such as the one depicted in \cref{fig:intertwined} where two distributable packets are rooted on the same qubit, applying Algorithm~\ref{alg:ALAP} on one of them yields a circuit (depicted in \cref{fig:intertwined}b) that would require embedding a starting process for the second packet to be distributed. The following equality can be derived by manipulating the circuit required to implement a starting process (see \cref{fig:ejpp}).

{\centering
\begin{tikzpicture}
    \node (squiggle) {
        \begin{quantikz}[column sep=4mm]
            \qw & \qw \arrow[d, squiggly] & \qw \\[20pt]
             & & \qw
        \end{quantikz}
    };
    \node[right=2mm of squiggle] (eq1) {\Large $=$};
    \node[right=2mm of eq1] (ebit) {
        \begin{quantikz}[column sep=2mm]
            \qw & \qw & \ctrl{1} & \qw & \qw \\
            & \dEbit{} \arrow[d, dash, squiggly] & \targ{} & \meter[style={scale=0.75, thin}]{} \arrow[d, dash, thick, xshift=1.25pt] \arrow[d, dash, thick, xshift=-1.25pt] & \\
            & \dEbit{} & \qw & \dGate{X} & \qw
        \end{quantikz}
    };
    \node[right=2mm of ebit] (eq2) {\Large $=$};
    \node[right=2mm of eq2] {
        \begin{quantikz}[column sep=4mm]
            \qw & \qw & \ctrl{1} & \qw \\[20pt]
             & \lstick{\ket{0}} & \targ{} & \qw
        \end{quantikz}
    };
\end{tikzpicture}
}

This means that --- for matters of embedding --- we can treat starting processes just as if they were non-local \CX{} gates. Since a \CX{} gate is equivalent to a \CZ{} sandwiched by \H{} gates, the approach discussed in the main text can be applied directly to embedding units containing starting processes. Doing so yields the distributed circuit from \cref{fig:intertwined}c. Proving that the resulting \CX{} correction gate is always local is nontrivial, but it follows from the fact that this intertwining of embeddings can only occur if there is a \CZ{} gate such as $y$ in \cref{fig:intertwined}c that is distributable in one packet and embedded in the other. Then, due to condition (b) of Lemma~\ref{lem:distributable_cond} and condition (b) of Lemma~\ref{lem:embedding_cond}, the remote module \texttt{B} that the qubit is being shared with must be the same for both packets. Consequently, both link qubits live in the same module \texttt{B} and the correcting \CX{} gate is local in \texttt{B}.
In the case of ending processes, a similar argument holds, although a simpler approach is to realise that the gate that would need to be embedded is just a classically controlled \Z{} gate and, consequently, its correction is straightforward.

\begin{figure*}
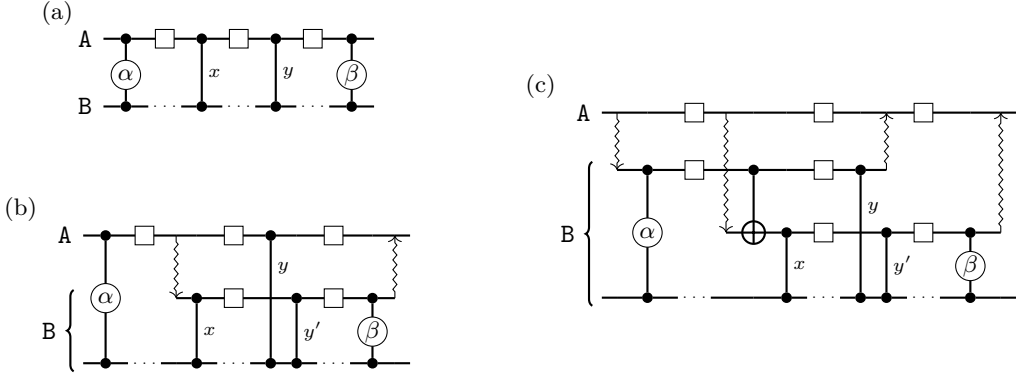

    \centering    
    \begin{tikzpicture}
    \node (input) {\input{images/intertwined/input}};
    \node[below=10mm of input] (intermediate) {\input{images/intertwined/intermediate}};
    \node[above right=-15mm and 15mm of intermediate] (final) {\input{images/intertwined/final}};
    \node[above left=-3mm and -3mm of input] {\small (a)};
    \node[above left=-3mm and -3mm of intermediate] {\small (b)};
    \node[above left=-3mm and -3mm of final] {\small (c)};
\end{tikzpicture}
    \caption{\textbf{Distribution with intertwined embeddings.} (a) Input circuit; two distributable packets are considered, both rooted on the qubit in $\texttt{A}$: $P_0 = \{\alpha, y\}$ and $P_1 = \{x, \beta\}$. (b) Circuit after distributing $P_1$; gate $y$ is embedded. (c) Circuit after distributing $P_0$ as well; we need to embed the starting process of $P_1$, which requires a local $\CX$ correction gate.}
    \label{fig:intertwined}
\end{figure*}

\section{Limited Link Qubits}
\label{sec:limited links}

In \cref{sec:benchmarks} no bound on the number of available link qubits was imposed. In practice two considerations bound this quantity:
\begin{itemize}
    \item Each module has a fixed total number of qubits, and a register of unlimited size dedicated to link qubits would be infeasible. The sum of the number of computation qubits and link qubits required by the distributed circuit should be less than the total number of qubits in each module.
    \item The sum of the number of computation qubits and the number of link qubits in the largest module should be strictly less than the number of qubits used by the original circuit. If this were not the case then the circuit could equally well be run within the largest module, using the link qubits as computation qubits.
\end{itemize}
In this section we firstly demonstrate that the methods introduced in \cref{sec:solutions} do not produce distributed circuits requiring excessively large link qubits. Secondly we introduce an approach to limiting the size of the link qubit register. 

We explore the size of link qubit registers across a collection of distributed circuits. As the distributable packets are larger and longer lasting in the case of Pauli Gadget circuits, we will consider them here. As there was no noticeable difference in performance between networks in \cref{sec:benchmarks results heterogeneous}, we will consider only Small World networks. We take networks with an average module size of 4, considering both unbounded link qubit registers, and link qubit registers bounded to contain 3 qubits. We consider networks with 3, 4, and 5 modules, taking 3 networks of each size. The Pauli Gadget circuits are of the same size as the total number of computational qubits in the network, with 5 random circuits generated for each network. We use the \cesd{} distributor as it was found to be the best performing in the results of \cref{sec:benchmarks results heterogeneous}. The relevant results are presented in \cref{fig:limited links}.

\begin{figure}
    \centering
    \begin{subfigure}[b]{\columnwidth}
        \includegraphics[width=\columnwidth]{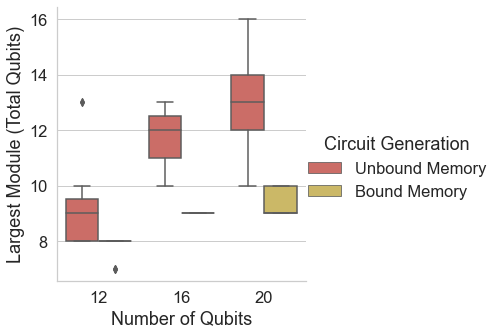}
        \caption{\textbf{Largest link qubits register size}. Largest Module is the size of the largest module in the network, measured as the sum of the sizes of the computational and link qubit registers. Link qubit memory is fixed to 3 qubits in every module; variance in module size is due to difference in computation memory size which is 4 on average.}
        \label{fig:limited links size}
    \end{subfigure}

    \vspace{5pt}
    
    \begin{subfigure}[b]{\columnwidth}
        \includegraphics[width=\columnwidth]{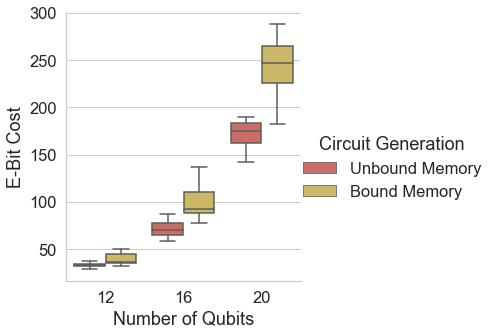}
        \caption{\textbf{Ebit cost.}}
        \label{fig:limited links cost}
    \end{subfigure}
    \caption{\textbf{Fixed link qubits register size.} Number of Qubits gives the size of the Pauli Gadget circuit, and therefore the total number of computational qubits in the network. Link qubit memory is fixed to 3 qubits in every module. In each plot boxes show the quartiles of the dataset, whiskers stretch to largest and smallest value within 1.5 of the interquartile range, and remaining points are outliers.}
    \label{fig:limited links}
\end{figure}

\paragraph{Link Qubit Register Size.} 

As seen in \cref{fig:limited links size}, and with few exception, the largest module required by circuits distributed onto networks with unbounded memory is smaller than the total number of qubits in the original circuit. This is more consistently the case as the number of modules in the network increases, and the size of the largest module appears to plateau. This suggests that the size of the link qubit registers is correlated with the average module size, rather than the number of qubits in the circuit.

\paragraph{Bounded Link Qubit Registers.} 

Our tool, \pytketdqc{}, checks whether the distributed circuit exceeds the link qubit register capacity of any module. If it does, the user may request \pytketdqc{} to amend it, at the cost of extra ebits. We now sketch the approach implemented in \pytketdqc{} to do so. As discussed in \cref{sec:to_pytket_circuit}, the generation of the distributed circuit proceeds iteratively, distributing each of the hyperedges of the \texttt{Distribution} one at a time. As we do so, we keep track of the available space of the link qubit registers of each module. Whenever the realisation of a hyperedge would cause the capacity to be exceeded, we make note of the offending module \texttt{A} and the non-local gate $g$ that this happened at. Then, we find the subset of hyperedges that share their qubit with module \texttt{A} (\ie{} those that have some of its gate-vertices allocated to \texttt{A}) and whose distributable packets span over gate $g$ --- not necessarily distributing it. These are the hyperedges that require the existence of a link qubit in module \texttt{A} at the time $g$ is distributed. If we split any of these hyperedges into two different ones --- by separating the gate-vertices that come before $g$ from those that come after --- we may remove the need to store its link qubit at the time of the bound violation. Thus, we simply need to use some heuristic to pick one of these hyperedges, update the \texttt{Distribution} splitting it, and run the circuit generation routine of \cref{sec:to_pytket_circuit} again; this process is repeated as many times as necessary to satisfy the user's bound to the link qubit registers. 

The heuristic we use to choose which of the hyperedges to split is simple: we pick the one whose gate-vertices immediately before and after $g$ are furthest apart in the circuit. Intuitively, this identifies the hyperedge whose link qubit in module \texttt{A} is the most `idle' at the time of the bound violation. Generally speaking, any circuit may be distributed using modules whose link qubit registers are only capable of storing a single link qubit (unless detached gates are used, in which case a minimum of two link qubits per module are required); the harsher the bound, the more ebits will be required to distribute the circuit.

The most relevant comparable result is that of \cite{Sundaram2022GeneralDistribution}, where a technique to bound link register size is introduced. Unlike ours, their main optimisation procedure already considers the bound to link qubit registers and, hence, their distributions tend to satisfy the bound more often than ours. However, as the authors explained, bound satisfaction is not guaranteed by their approach either, which means they sometimes need to apply a final pass similar to ours at the end of the optimisation. When comparing their pass with ours, we find their approach to be too strict: it picks one of the distributable packets causing the bound violation and opt to distribute each of its \CZ{} gates separately, consuming one ebit for each. In contrast, our approach amends the distributed circuit with less ebit overhead, at the cost of requiring a non-trivial search and repeat-until-success approach, which will take longer to run.

\cref{fig:limited links size} shows that strictly capping the size of the link qubit register to 3 limits the size of the largest module below that produced by distributing onto networks with unbounded link qubit registers. As expected, and as seen in \cref{fig:limited links cost}, the cost in ebits of the resulting distribution is increased in the case of bounded memory.

\end{document}